\definecolor{yourcolor}{RGB}{128, 0, 0}
\begin{document}

 \allowdisplaybreaks[4]

\newtheorem{theorem}{Theorem}
\newtheorem{definition}{Definition}
\newtheorem{assumption}{Assumption}
\newtheorem{lemma}{Lemma}
\newtheorem{example}{Example}
\newtheorem{remark}{Remark}
\newtheorem{corollary}{Corollary}
\newenvironment{proof}{{\noindent\bf Proof.}\ }{\hfill $\square$\par}

\begin{frontmatter}

\title{\textcolor{black}{When classical predictors fail: an exact network outflow predictor for heterogeneous multi-agent systems with communication delays}} 
   

\thanks[footnoteinfo]{The work of Qin Fang and Yang Zhu was partially supported by Zhejiang R\&D Program (Grant No. 2025C01022), National Natural Science Foundation of China (Grant No. 62303410), Zhejiang Provincial Natural Science Foundation of China (Grant No. LQ23F030014), and Open Research Project of State Key Laboratory of Industrial Control Technology, China (Grant No. ICT2025B78, ICT2025B09). The work of Mamadou Diagne was funded by the NSF CAREER Award CMMI-2302030 and  the NSF grant CMMI-2222250.  Corresponding author: Yang Zhu.}

\author[hangzhou]{Qin Fang}\ead{12432069@zju.edu.cn},    
\author[SanDiego]{Mamadou Diagne}\ead{mdiagne@ucsd.edu},
\author[hangzhou,ningbo]{Yang Zhu\thanksref{footnoteinfo}}\ead{ zhuyang88@zju.edu.cn}      

\address[hangzhou]{College of Control Science and Engineering, State Key Laboratory of Industrial Control Technology, Zhejiang University, Hangzhou, China}  
\address[SanDiego]{Department of Mechanical and Aerospace Engineering, University of California, San Diego, La Jolla, CA 92093, USA}
\address[ningbo]{Ningbo Innovation Center, Zhejiang University, Ningbo, China}

\begin{keyword}                           
Heterogeneous multi-agent systems; communication delays; predictor design.               
\end{keyword}                             

\begin{abstract}                          
  This paper develops an information outflow predictor-feedback framework for the \emph{exact compensation} of constant but nonuniform communication delays in the output synchronization problem of discrete-time heterogeneous multi-agent systems (MASs). The delays considered here act exclusively on communication channels between neighboring agents, rather than on the plant state, input or output, and thus fall outside the scope of classical predictor-feedback formulations. We show that standard approaches based on time inversion and variational construction of predictor states are insufficient, since even for simple directed acyclic communication graphs the required \emph{prediction horizon} exceeds the delay window. We further illustrate that exact predictor-based compensation is achievable only in the discrete-time setting, while in continuous-time configuration the predictor state becomes non-computable due to the need for a continuum--infinitely many--inaccessible future neighboring-agent information. Motivated by these observations, we propose a distributed prediction architecture in which a prediction-based distributed observer  exactly reconstructs the flow of information exchanged between agents across the network. In addition, we quantify the maximum communication load resulting from information  transmission over the network and show that it worst-case scales linearly with the exosystem dimension, the number of communication edges, the maximum communication delay, and the graph depth. Furthermore, the layer-by-layer information flow eliminates the effect of communication delays after a \emph{finite number of steps}.   Based on this structure, we design prediction-based distributed state-feedback and dynamic output-feedback controllers by combining standard feedback designs with prediction-based distributed observers serving as feedforward components, such that the output of each agent asymptotically tracks the trajectory generated by the exosystem, thereby ensuring output synchronization. The effectiveness of the proposed approach is illustrated on a nonlinear Susceptible--Infected--Recovered (SIR) epidemic model linearized via the Koopman operator framework, where the proposed strategy reduces the infection peak by 200,000 individuals in a population of four million.
\end{abstract}

\end{frontmatter}

\section{Introduction}

\subsection{On the exact compensation of time delays}

\textcolor{black}{ \textbf{From the Smith Predictor to Artstein Reduction.} The development of predictor feedback controllers originated with the pioneering work of Otto J. M. Smith in the 1950s. In the seminal contributions~\cite{smith1957closer,smith1959controller}, the Smith predictor introduced a \emph{delay-free} model to predict the future measured output of a system with input delay, corresponding to the response that would be obtained in the absence of the delay. By decomposing the plant into a delay-free  and a pure delay dynamics, this internal-model structure enables an \emph{exact} cancellation of the delay in the feedback loop, provided that both the plant dynamics and the constant delay are known exactly. Although the original Smith predictor did not explicitly rely on the ideas of \emph{time inversion} and the \emph{variation-of-constants formula}, these concepts emerged in the 1980s as predictor feedback evolved toward \emph{state-space} and \emph{infinite-dimensional} formulations, notably through the Artstein reduction~\cite{artstein1982linear}. The key insight of \cite{artstein1982linear} is that a predictor state can be constructed via the variation-of-constants formula, yielding an equivalent delay-free representation. This construction achieves exact delay compensation and reveals an inherently infinite-dimensional structure, since the dynamics depend on the full input history $u(s)$, $s\in[t-\tau,t]$, where $\tau$ denotes the input delay. This past input history is used to predict the state over the future interval $[t,t+\tau]$, known as the \emph{prediction horizon.}}

 \textcolor{black}{\textbf{PDE interpretation of delays.} This infinite-dimensional viewpoint was further developed in the 1990s and 2000s when it was recognized that input delays can be reformulated as transport partial differential equations, thereby connecting continuous-time predictor feedback with boundary control of hyperbolic equations. In particular, through the framework of PDE backstepping pioneered by  Krsti\'c \cite{krstic2008compensating}, the design of predictor feedback for exact delay compensation has led to a broad range of methodological developments, including extensions to time-varying and state-dependent delays \cite{bekiaris2012compensation,bekiaris2013nonlinear}, delay-adaptive control \cite{krstic2009delay,zhu2020delay}, numerical approximation and implementation of predictor feedback \cite{karafyllis2017predictor}, and predictor-based extremum seeking control under delays \cite{oliveira2022extremum}. Representative recent developments in this direction include unbiased and exponentially convergent ESC under delay \cite{uESC-PDE}, within the extensive body of literature associated with the above monographs (e.g., \cite{zhu2020predictor,zhu2020observer,fang2023prediction,zhang2023adaptive}), with applications ranging from spark-ignition engines and DC motors to Baxter manipulators \cite{bresch2010adaptive,lechappe2016delay,bagheri2019feedback}. To extend these methods to discrete-time settings, various discrete-time predictor frameworks have also been developed \cite{gonzalez2012predictor,gonzalez2013robustness,karafyllis2013robust,choi2016compensation}.}

 \textcolor{black}{Although predictor feedback and related delay compensation techniques achieve exact compensation of input, state, and output delays for both linear and nonlinear continuous-time systems, this property typically holds only under idealized assumptions, with performance degradation arising in the presence of numerical approximation or adaptive implementations. Moreover, most existing results do not explicitly account for \emph{inter-agent communication delays} that naturally arise in cooperative control of MASs over wireless communication channels. Unlike input or output delays associated with an individual plant, such delays affect the exchange of neighboring-agent information and relative-state measurements in distributed control protocols, making the closed-loop dynamics depend on delayed and topology-coupled information. This feature reduces the direct applicability of standard predictor-based designs in modern cyber-physical and networked environments. \emph{In this context, the present work aims to extend predictor feedback methodologies to MASs with communication delays, developing a framework that preserves the exact compensation principle.}}

\subsection{Communication delays in heterogeneous multi-agent systems}

\noindent\textbf{Synchronization versus consensus.} Distributed cooperative control of multi-agent systems (MASs) has attracted significant research attention over the past decades, driven by diverse applications such as unmanned aerial vehicles (UAVs) formation control \cite{zhang2022distributed}, networked robotic systems \cite{wang2018fixed} and smart grids \cite{cai2015distributed}. 
The fundamental objective is to ensure that the group as whole reaches an agreement on certain quantities of interests through a distributed protocol relying solely on locally exchanged information over a communication network--commonly referred to as consensus or synchronization. \textcolor{black}{There is a subtle difference between these two notions (see \cite{scardovi2009synchronization, wieland2011internal,FB-LNS} and references therein). Roughly speaking, consensus primarily concerns how communication constraints and graph connectivity affect the convergence of multiple agents to a common value, often under relatively simple individual dynamics. In contrast, synchronization is more concerned with coordinating systems with nontrivial intrinsic dynamics, and seeks to ensure that all agents asymptotically evolve along a common trajectory through the exchange of relative information.} Nowadays, numerous distributed control algorithms have been developed to achieve consensus or synchronization under a variety of communication topologies, system dynamics, and performance requirements. In the following, we discuss the synchronization problem that is the focus of the present work.
 
\noindent\textbf{Delay-free synchronization.} Generally, MASs can be classified into two categories: homogeneous and heterogeneous MASs. Early research primarily focused on homogeneous MASs, in which all agents are assumed to possess identical dynamics, ranging from simple first-order models \cite{jadbabaie2003coordination,tanner2003stability} to more general high-order dynamics \cite{wieland2008consensus,ren2006high,scardovi2009synchronization,xu2018consensusability}. In contrast to homogeneous MASs, heterogeneous MASs involve agents with non-identical, possibly different-order dynamics, making them more suitable for modeling real-world systems. In such cases, output synchronization often becomes the primary goal. There are two methods to deal with output synchronization problems of heterogeneous MASs: distributed internal model principle \cite{wieland2009internal, wieland2011internal,zuo2017output} and distributed feedforward approach \cite{su2011cooperative, su2012cooperative, cai2015leader, cai2017adaptive, huang2016cooperative}. In contrast to the distributed internal model principle, the distributed feedforward approach does not require satisfying the transmission zero condition---a condition that is never met when the output dimension of the system exceeds its input dimension.
The distributed feedforward approach, first proposed in \cite{su2011cooperative}, employs a distributed observer for each agent to estimate the state of the exosystem and design distributed dynamic controllers, thereby addressing the cooperative output regulation problem of continuous-time heterogeneous linear MASs. Subsequently, the approach was applied to deal with different types of problems, such as discrete-time systems \cite{huang2016cooperative,zhang2024mean}, switching communication network \cite{su2012cooperative,lu2016distributed,liu2018leader}, uncertain exosystem matrix \cite{cai2015leader, cai2017adaptive} and communication channels with packet loss \cite{zhang2024mean}. Nevertheless, numerous challenges remain to be addressed.

\noindent \textbf{ Synchronization under communication delays.} For MAS coordination over wireless communication networks, communication channels among agents are frequently impaired by communication delays, as information exchange is constrained by physical transmission, processing latencies, and other network-induced effects. Moreover, time delays frequently induce instability (see \cite{fridman2014introduction, Jean-Pierre2003Time-delay}), posing challenges to the stability and performance of MAS coordination. Existing work on synchronization problem with communication delays has primarily focused on continuous-time systems (see \cite{moreau2004stability, olfati2004consensus,zhu2010leader, zhou2014consensus,lu2016distributed} and references therein).  Output synchronization problem of  discrete-time  heterogeneous MASs with communication delays have been addressed in \cite{yin2013event, xu2017consensus,liu2022scale, zhang2024fully}.
 In particular, in \cite{xu2017consensus}, A modified distributed observer was proposed as a basis for distributed controller design in MASs subject to arbitrarily bounded, nonuniform, time-varying communication delays. Nevertheless, the consensus analysis in \cite{xu2017consensus} relies on the consensus of an equivalent delayed MAS, indicating that the delay is merely tolerated rather than exactly compensated.

\noindent \textbf{Synchronization under predictor feedback design.} The predictor feedback framework has been extended to distributed cooperative control of MASs with delays, although most existing results are restricted to networks with homogeneous dynamics. In particular, nested predictor-based feedback protocols were proposed in \cite{liu2018consensus} for discrete-time homogeneous MASs with state, input, and communication delays, an observer–predictor approach was developed in \cite{liu2020observer}, and a discrete-time predictor for continuous-time homogeneous MASs was obtained via discretization in \cite{ponomarev2017discrete}.
\textcolor{black}{Extending predictor-based control to heterogeneous MASs is more challenging. A notable contribution is \cite{tan2013consensus}, which proposed a prediction-based distributed control scheme for discrete-time heterogeneous MASs with uniform delays under the networked predictive control system (NPCS) framework, using a Luenberger-inspired one-step-ahead predictor to construct multi-step predictions. However, exact communication-delay compensation for heterogeneous MASs remains largely unresolved. In \cite{tan2013consensus}, the prediction error vanishes only asymptotically, leading to inexact compensation at finite time, and the approach further requires agents to have identical state dimensions, which limits broad applicability to output synchronization settings. Besides, the method in \cite{tan2013consensus} requires each local agent to know the system matrix $A_j$, input matrix $B_j$, and output matrix $C_j$ of its neighboring agents. In other words, this implies that each local agent has some prior global information. }
\subsection{Contributions}

\textcolor{black}{In this work, we address the output synchronization problem for discrete-time heterogeneous MASs subject to constant and distinct communication delays via predictor feedback design. The motivation for this problem stems primarily from meta-population SIR models \cite{zhong2021country} and V2V-based connected and automated vehicles (CAVs) \cite{zhu2020v2v}, where communication delays naturally arise from inter-region mobility and wireless communication, respectively. The main contributions are summarized as follows:}

\begin{itemize}[leftmargin=*, itemsep=0pt]
\item 
\textbf{Limitations of exact outflow prediction in directed acyclic MASs with communication delays.} Although the communication graph is assumed to be directed and acyclic, i.e., the simplest case, we show that exact prediction of information outflow in MASs is not achievable using classical predictor feedback designs based on time inversion and the variation-of-constants formula, for both homogeneous and heterogeneous systems in continuous and discrete time. The continuous-time predictor is not impractical merely because it involves future information. The fundamental difficulty is that exact delay compensation in the continuous-time setting requires access to signal values over an entire future interval, i.e., a continuum of future-time instants rather than a finite set of points. Furthermore, in both continuous- and discrete-time settings, communication delays induce an inherent information-transmission mismatch: agents can only access delayed neighbor information, whereas predictor construction requires an extended prediction horizon beyond  $[t, t+\tau].$ This structural limitation prevents the construction of the classical predictor state. To address the second issue, one may require neighboring agents to transmit additional future information, which motivates distributed predictors in the discrete-time setting, where such information can be generated recursively in a step-by-step manner.

    \item \textbf{Information outflow prediction design.}
    We develop an exact information outflow delay compensation tailored for communication delays and requiring the definition of an \emph{extended prediction horizon.} Inspired by the classical state predictor design (see \cite{artstein1982linear, krstic2009delay}), we design a discrete-time predictor framework that transmits future information over a directed acyclic graph, thereby enabling the exact compensation of distinct communication delays in an agent-by-agent manner. In addition, we introduce a distributed predictor that serves as an auxiliary mechanism to transmit additional future information, thus facilitating the predictor design of neighboring agents.
    \item  \textbf{Prediction-based distributed control under finite-time-step delay compensation.} We design a prediction-based distributed observer to estimate the state of the exosystem. By integrating the standard distributed observer design with the proposed predictor and distributed predictors, the modified distributed observers can achieve consensus despite the presence of communication delays.  Then,  we propose two prediction-based distributed control laws, state feedback and dynamical output feedback, to guarantee the output synchronization of the heterogeneous MASs. In contrast to \cite{xu2017consensus,tan2013consensus}, the proposed approach eliminates the effect of delays within a finite number of time steps and consequently achieves better transient performance. 

    \item \textbf{Koopman-based validation on epidemic spread under communication delays.} Finally, we apply the proposed methods to a susceptible–infected–recovered (SIR) epidemic model with communication delays. Using Koopman operator theory, we construct a linear Koopman-based approximation to embed the system within the proposed framework. Simulation results show that the proposed delay-compensation strategy effectively reduces the infection peak compared with the case without delay compensation.
\end{itemize} 

The paper is organized as follows. Section \ref{problem-statement} states the problem and motivates the discrete-time setting. Section \ref{distributed-obs}
presents the  prediction-based distributed observer design. Section \ref{adsec} established the exact compensation of communication delays and  states the realization of distributed observers consensus. Section \ref{distributed-control} provides two distributed control designs to achieve the output synchronization under communication delays. Section \ref{simulation} presents numerical simulations and an application to an epidemic model linearized via Koopman operator theory. The paper ends with the concluding remarks and  perspectives in Section \ref{conclusion}.

\noindent {\bf Notation.} 
Let $\mathbb{N}$, $\mathbb{N}^+$, $\mathbb{R}$, $\mathbb{R}^+$, $\mathbb{R}^n$, and $\mathbb{R}^{n\times m}$ denote the set of all natural numbers, all positive integers, real numbers, positive real numbers, real column vectors of dimension $n$, and real matrices of dimension $n\times m$, respectively.  
Let $N$, $r$, $p$, $q$, $m_i$ and  $n_i\in\mathbb{N}^+$.
For a vector $X \in \mathbb{R}^n$, its Euclidean $2$-norm is denoted as $|X|$. 
For a matrix $A = \{a_{ij}\}\in \mathbb{R}^{n\times n}$, $\rho(A)$ and $A^\mathrm{T}$ represent its spectral radius  and transpose, respectively. 
We indicate its induced spectral norm as $\|A\| = (\lambda_{\max}(A^\mathrm{T}A))^{\frac{1}{2}}$. 
$\mathbf{I}_n$ denote the $n$-by-$n$ identity matrix. 
$\mathbf{0}_n$ and $\mathbf{1}_n$ represent $n$-dimensional column vectors with all entries equal to zero and one, respectively. $\mathbf{0}_{n \times m }$ represents $n$-by-$m$ matrix with all entries equal to zero.
$\text{diag}[a_1,a_2,\ldots,a_n]$ denotes a $n$-by-$n$ diagonal matrix with $a_1,a_2,\ldots,a_n$  as its diagonal entries.


\noindent \textbf{Graph theory and definitions.} In a weighted graph $\mathcal{G} = (\mathcal{N}, \mathcal{E}, \mathcal{A})$, the set of nodes is given by $\mathcal{N} = \{0,1, 2, \ldots, N\}$, and the set of edges is $\mathcal{E} \subseteq \mathcal{N} \times \mathcal{N}$. The matrix $\mathcal{A} = \{a_{i,j}\} \in \mathbb{R}^{(N + 1) \times (N+1)}$ is the weighted adjacency matrix, where $a_{i,j} \ne 0$ if $(j,i) \in \mathcal{E}$, and $a_{i,j} = 0$ otherwise. An edge $(j,i)$ indicates that \textbf{Agent} $\boldsymbol{i}$ receives information from \textbf{Agent} $\boldsymbol{j,}$ but not necessarily vice versa.
An ordered sequence of nodes $(v_1, v_2, \ldots, v_m)$ with $m \ge 1$ is called a path $\pi$ if $(v_i, v_{i+1}) \in \mathcal{E}$ for all $i \in {1, \ldots, m-1}$. The length of a path $\pi$ consisting of $m$ connected edges is defined as $\mathbb{L}(\pi) = m$. The weight of a path $\pi$, denoted by $w(\pi)$, is defined as the sum of the weights of its edges. For $i < j$, let $\Pi_{i,j}$ denote the set of all paths $\pi$ from node $i$ to node $j$ such that $w(\pi) < \infty$. 
\section{Problem statement and motivation}\label{problem-statement}

\subsection{System's dynamics  and preliminaries}
\textcolor{black}{Consider a discrete-time heterogeneous linear MAS consisting of $N$ agents, where the dynamic model of \textbf{Agent} $\boldsymbol{i,}$ $i\in\{1,2,\ldots,N\}$, is given by}:
\begin{align}
	\begin{array}{l}
		x_i(k+1) = A_i x_i(k) + B_i u_i(k),\\
		y_i(k) = C_{i}x_i(k),\ i=1,2,\ldots,N, 
	\end{array} \label{equ:plant}
\end{align}
where $x_i(k)\in\mathbb{R}^{n_i}$ denotes the state of \textbf{Agent} $\boldsymbol{i,}$ $u_i(k)\in\mathbb{R}^{m_i}$ represents the control input of \textbf{Agent} $\boldsymbol{i,}$  $y_i(k)\in\mathbb{R}^{p}$ is measured output of \textbf{Agent} $\boldsymbol{i}$. In addition, $A_i\in\mathbb{R}^{n_i \times n_i}$, $B_i\in\mathbb{R}^{n_i \times m_i}$ and  $C_i\in\mathbb{R}^{p \times n_i}$ represent the system, input and output matrices of \textbf{Agent} $\boldsymbol{i,}$ respectively. 

Additionally, we introduce the following exosystem, which serves as the leader and generates the exogenous signal to be tracked by agents:
\begin{align}
	\upsilon(k+1) = S\upsilon(k), \label{equ:exosystem}
\end{align}
where $\upsilon(k) \in \mathbb{R}^q$ is the state of the exosystem. 
\begin{figure}
	\centering
	\begin{tikzpicture}[scale=0.7]
		
		\draw[fill=Maroon!50,very thick] (0,0.1) circle (0.6) node {\scriptsize $0$};
		
		\draw[->, dashed, Red, very thick, >=latex]  (0,-0.5) ->  (0,-1.25) ;
		
		\draw[->, dashed, Red, very thick, >=latex]  (-0.6,0.1) --  (-4.5,0.1) --  (-4.5,-4.25) --  (-3.75,-4.25);
		
		\draw[->, dashed, Red, very thick, >=latex]  (-0.6,0.1) --  (-4.5,0.1) --  (-4.5,-6.5) --  (-3.75,-6.5);
		
		\draw[->, dashed, Red, very thick, >=latex]  (-0.6,0.1) --  (-4.5,0.1) --  (-4.5,-8.75) --  (-3.75,-8.75);
		
		\draw[fill=NavyBlue!50,very thick] (-2.75,-2) circle (0.6) node {\scriptsize $1,1$};
		\draw[fill=NavyBlue!50,very thick] (-1.25,-2) circle (0.6) node {\scriptsize $1,2$};
		\draw[fill=NavyBlue!50,very thick] (1.25,-2) circle (0.6) node {\scriptsize $1,N_1$};
			\node at (0,-2) {...};
		\begin{pgfonlayer}{background}
			\draw[rounded corners,very thick] (-3.75,-1.25) rectangle (3.75,-2.75) node[midway, xshift=2.1cm] {\scriptsize Layer $1$}; 
		\end{pgfonlayer} 
		
		\draw[->, dashed, Blue, very thick, >=latex]  (0,-2.75) ->  (0,-3.5) ;
		
		\draw[->, dashed, Blue, very thick, >=latex]  (3.75,-2) --  (4.5,-2) --  (4.5,-6.5) --  (3.75,-6.5);
		
		\draw[->, dashed, Blue, very thick, >=latex]  (3.75,-2) --  (4.5,-2) --  (4.5,-9) --  (3.75,-9);
		
		\draw[fill=NavyBlue!50,very thick] (-2.75,-4.25) circle (0.6) node {\scriptsize$2,1$};
		\draw[fill=NavyBlue!50,very thick] (-1.25,-4.25) circle (0.6) node {\scriptsize$2,2$};
		\draw[fill=NavyBlue!50,very thick] (1.25,-4.25) circle (0.6) node {\scriptsize $2,N_2$};
		\node at (0,-4.25) {...};
		\begin{pgfonlayer}{background}
			\draw[rounded corners,very thick] (-3.75,-3.5) rectangle (3.75,-5) node[midway, xshift=2.1cm] {\scriptsize Layer $2$}; 
		\end{pgfonlayer} 
		
		\draw[->, dashed, purple, very thick, >=latex]  (0,-5) ->  (0,-5.75) node[midway, right]{...};
		
		\draw[->, dashed, purple, very thick, >=latex]  (3.75,-4.25) --  (4.125,-4.25) --  (4.125,-8.5) --  (3.75,-8.5);
		
		\draw[fill=NavyBlue!50,very thick] (-2.75,-6.5) circle (0.6) node {\scriptsize$i,1$};
		\draw[fill=NavyBlue!50,very thick] (-1.25,-6.5) circle (0.6) node {\scriptsize$i,2$};
		\draw[fill=NavyBlue!50, very thick] (1.25,-6.5) circle (0.6) node {\scriptsize $i,N_i$};
		\node at (0,-6.5) {...};
		\begin{pgfonlayer}{background}
			\draw[rounded corners,very thick] (-3.75,-5.75) rectangle (3.75,-7.25) node[midway, xshift=2.1cm] {\scriptsize Layer $i$}; 
		\end{pgfonlayer} 
		
		\draw[->, dashed, darkgray, very thick, >=latex]  (0,-7.25) ->  (0,-8);
		
		\draw[fill=NavyBlue!50,very thick] (-2.75,-8.75) circle (0.6) node {\scriptsize$M,1$};
		\draw[fill=NavyBlue!50,very thick] (-1.25,-8.75) circle (0.6) node {\scriptsize$M,2$};
		\draw[fill=NavyBlue!50,very thick] (1.25,-8.75) circle (0.6) node {\scriptsize $M,N_M$};
		\node at (0,-8.75) {...};
		\begin{pgfonlayer}{background}
			\draw[rounded corners,very thick] (-3.75,-8) rectangle (3.75,-9.5) node[midway, xshift=2.1cm] {\scriptsize Layer $M$}; 
		\end{pgfonlayer} 
	\end{tikzpicture}
	\caption{Multi-agent grouping structure.}
	\label{fig_structure}
\end{figure}
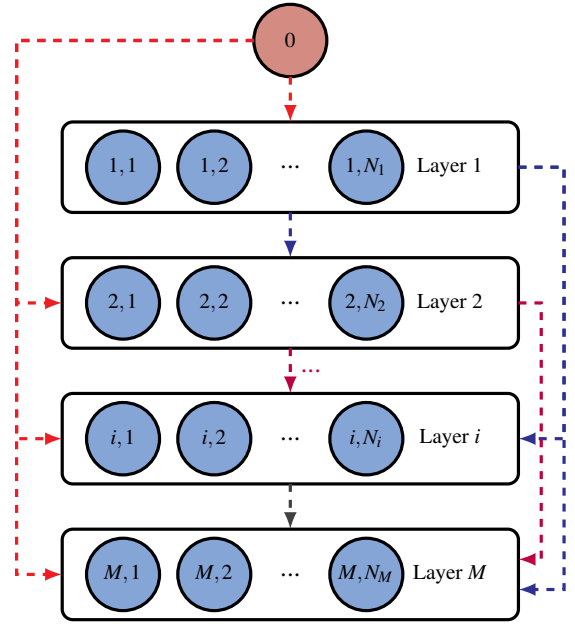
\begin{assumption} \label{assum1}
	The graph $\mathcal{G}$ is a directed acyclic graph (DAG).
\end{assumption}
\begin{assumption} \label{assum3}
	The pair $(A_i,B_i)$ is stabilizable, and the pair $(C_{i}, A_i)$ is detectable for $i = 1, 2, \ldots, N$.
\end{assumption}
\begin{assumption} \label{assum4}
	For $i=1,2,\ldots,N$, the following regulator equations:
	\begin{align}
		\begin{array}{rl}
			X_iS =& A_iX_i + B_i U_i,\\
			0 =& C_i X_i + F,
		\end{array}\label{equ:regulation}
	\end{align}
	have solution pairs $(X_i,U_i)$, where $X_i\in\mathbb{R}^{n_i \times q}$, $U_i\in\mathbb{R}^{m_i \times q}$ and $F\in\mathbb{R}^{p\times q}$. \textcolor{black}{In particular, $F$ denotes the reference signal generation matrix, which maps $\upsilon$ to the reference output.}
\end{assumption}

\textcolor{black}{To make the direction of information transmission explicit, we partition the graph into $M+1$ layers, where layer $0$ consists of the exosystem (or leader), and each of the remaining layers contains $N_i$ nodes, $i=1,2,\ldots,M$, with $\sum_{i=1}^M N_i=N$, as illustrated in Fig.~\ref{fig_structure}. Under Assumption \ref{assum1},  such a layered representation can always be constructed according to a topological ordering of the nodes. In this representation, the layers are arranged such that information is transmitted unidirectionally from an upper layer to lower layers. Here, for the convenience of analysis, we define $M=N$, thereby each layer assigns a single agent. } 

Furthermore, the weighted adjacency matrix $\mathcal{A}$ and Laplace matrix $\mathcal{L}$ present the following lower-triangular forms:
\begin{gather*}
	\mathcal{A} =\begin{bNiceMatrix}[small,xdots/shorten=6pt]
		0 & 0 & \cdots & 0 & 0\\
		a_{1,0} & 0 & \cdots & 0 & 0\\
		\Vdots & \Vdots & \Ddots & \Vdots & \Vdots \\
		a_{N-1,0} & a_{N-1,1} & \cdots & 0 & 0 \\
		a_{N,0} & a_{N,1} & \cdots & a_{N,N-1} & 0 \\
	\end{bNiceMatrix}\in\mathbb{R}^{(N+1)\times (N+1)},\\
	\mathcal{L} =\begin{bNiceArray}{c|cw{c}{1cm}c}[small,margin]
		0 & 0 & \Cdots & 0\\
		\hline
		-a_{1,0} & \Block{4-3}{\mathcal{H}+\mathcal{D}_0} &  &  \\
		\Vdots & & &  \\
		-a_{N-1,0} &   &    &   \\
		-a_{N,0} &   &    &   
	\end{bNiceArray}\in\mathbb{R}^{(N+1)\times (N+1)},
\end{gather*}
where $\mathcal{D}_0 = \text{diag}[a_{1,0},a_{2,0},\ldots,a_{N,0}]\in\mathbb{R}^{N\times N}$,
\begin{gather*}
	\mathcal{H} =  \begin{bNiceMatrix}[small,xdots/shorten=6pt]
		0 & 0& \cdots & 0 & 0\\
		-a_{2,1} & a_{2,1} & \cdots & 0 & 0\\
		\Vdots & \Vdots & \Ddots & \Vdots & \Vdots \\
		-a_{N-1,1} & -a_{N-1,2} & \cdots & \sum_{j=1}^{N-2}a_{N-1,j} & 0 \\
		-a_{N,1} & -a_{N,2} & \cdots & -a_{N,N-1} & \sum_{j=1}^{N-1}a_{N,j} \\
	\end{bNiceMatrix}\in\mathbb{R}^{N\times N}.
\end{gather*}
In addition, in a distributed control setting, not all agents are necessarily required to have direct access to the exosystem information. This is reflected by the condition that  \textcolor{black}{$a_{1,0} > 0$ and $a_{i,0} \ge 0$, $i=2,3,\ldots,N$.}  Moreover, the fact that each layer contains a single agent implies that $a_{i,i-1}>0$, $i=2,3,\ldots,N$.
\begin{remark}
	\textcolor{black}{The design constraints and information structure considered in this paper are summarized as follows.
		\begin{itemize}[leftmargin=*, itemsep=0pt]
			\item The communication graph and the distinct communication delays  are fixed and known. The directed acyclic graph is  used to determine the layer structure.
			\item The agent dynamics $(A_i,B_i,C_i)$ are local information and are not required to be shared among agents. Each agent only requires its own system matrices to design the corresponding local feedback and feedforward gains, based on the solution of its own regulator equations.
			\item During online implementation, each agent only uses its own state, its distributed observer state, and  the prediction of the delayed information received from its neighbors through the communication channel.
		\end{itemize}
		Thus, the proposed method is distributed in implementation.} 
\end{remark}

 In the sequel, we aim to design prediction-based distributed control strategies to exactly compensate for these communication delays  in finite time steps. Since the agents may exhibit heterogeneous dynamics, the dimensions of their states may differ. \textcolor{black}{Under this setting, our objective is to achieve output synchronization. To this end, our design consists of two main components:
 \begin{enumerate} [leftmargin=*, itemsep=0pt]
 \item The construction of a \emph{distributed observer} $\xi_i(k)$ that achieves  consensus, in Lemma \ref{lemma2}, with respect to the exogenous signal $\upsilon(k)$ generated by \eqref{equ:exosystem}, namely, $\lim_{k\to\infty}|\xi_i(k) - \upsilon(k)|=  0$, $i=1,2,\ldots,N$. 
 \item  Achieving the output synchronization of the heterogeneous MAS, i.e., $\lim_{k\to\infty}|y_i(k)-y_j(k)|=0$, for any $i,j\in\{1,2,\ldots,N\}$, in Theorems \ref{theorem1} and \ref{theorem2} 
 \end{enumerate} 
 The result of the first stage serves as a key basis for establishing the output synchronization result in the second stage.}


\subsection{Motivation for a Discrete-Time Framework}\label {discretetime-motivation}
In this section, we show how the structure of the problem prevents the design of a continuous-time predictor to exactly compensate for the delays. We emphasize that the design of a predictor is in sharp contrast with the results  in \cite{su2011cooperative}, where a distributed feedforward approach was employed to realize  output synchronization of MASs in continuous-time under delay-free case. To simplify the exposition and illustrate our point, we consider only identical delays in the sequel.

Consider the following continuous-time counterpart:
	\begin{align}
		\begin{array}{l}
			\dot{x}_i(t) = A_i x_i(t) + B_i u_i(t),\\
			y_i(t) = C_{i}x_i(t),\ i=1,2,\ldots,N, 
		\end{array} \label{equ:plant_continuous}
	\end{align}
	along with the continuous-time exosystem
	\begin{align}
		\dot{\upsilon}(t) = S\upsilon(t),
	\end{align}
	where the representations and dimensions of the notations are identical to those in \eqref{equ:plant}--\eqref{equ:exosystem}.

 In \cite{su2011cooperative},  the following distributed observer is proposed, whose state asymptotically converges to $\upsilon$:
\begin{align*}	
	\dot{\xi}_i(t) =&\ \textstyle S\xi_i(t) -  \beta a_{i,0} (\xi_i(t) - \upsilon(t)) \notag\\
	&  \textstyle - \beta \sum_{j=1}^{i-1}a_{i,j} (\xi_i(t) - \xi_{j}(t)),
\end{align*}
where $\xi_i(t) \in \mathbb{R}^p$ represents the state of the distributed observer for each \textbf{Agent} $\boldsymbol{i}$ can be derived. In sharp contrast with \cite{su2011cooperative}, we consider the presence of  uniform communication delays, all equal to $\tau \in \mathbb{R}^+,$ which prevent receivers from instantaneously obtaining information from their senders. Therefore, we seek to compensate for the effects of delayed information through prediction. To this end, the following modified distributed observers are constructed:
\begin{align*}	
	\dot{\xi}_i(t) =&\ \textstyle S\xi_i(t) -  \beta a_{i,0} (\xi_i(t) - P_0(t - \tau)) \notag\\
	&  \textstyle - \beta \sum_{j=1}^{i-1}a_{i,j} (\xi_i(t) - P_{j}(t-\tau)),
\end{align*}
where $P_j(t)$ is the prediction of the  distributed observer state $\xi_j$, $j=0,2,\ldots,i-1$, obtained via the classical time-inversion technique and the variation-of-constants formula,  $P_j(t)=\xi_j(t+\tau).$ Following \cite{krstic2009delay}, the predictors for the exosystem and each agent  are given by
\begin{align*}
    P_0(t)  = \upsilon(t+\tau ) =&\ e^{S\tau}\upsilon(t),\\
	P_i(t) = \xi_i(t+\tau ) =&\ \textstyle e^{\mathbf{S}_i \tau} \xi_i(t) + \beta a_{i,0} \int_{t - \tau}^{t}e^{\mathbf{S}_i(t-s)}P_0(s)ds \notag\\
	& \textstyle    +  \beta \sum_{j=1}^{i-1} a_{i,j} \int_{t - \tau}^{t}e^{\mathbf{S}_i(t-s)} P_{j}(s)ds,
\end{align*}
where $\mathbf{S}_i = S - \beta \sum_{j=0}^{i-1}a_{i,j}\mathbf{I}_p$.

\noindent \textcolor{black}{\textbf{Infeasibility argument.} The above predictors are not implementable since at $t=t^*$,  \textbf{Agent} $\boldsymbol{i}$, only has access to the information from its neighbors at $t = t^*-\tau$, namely, $P_j(t^*-\tau)$,  whereas the predictor requires  knowledge of the signals over the entire interval $[t^*-\tau,t^*]$ through the integral operator induced by the variation-of-constant-formula. In other words, the prediction task is \emph{noncausal:} predicting future information requires access to information that is not yet available to \textbf{Agent} $\boldsymbol{i}$.  Such a \emph{causality and information-availability} issue is not unique to the continuous-time case; it also appears in the discrete-time case if a standard predictor is implemented directly, which would be discussed later. Nevertheless,  \emph{an information outflow predictor} obtained through recursive formula leads to a solution in discrete-time setting. Such extended predictors rely on an algorithm that requires  neighboring agents  to transmit additional future information at $t=t^*-\tau$ via distributed predictors.}

\textcolor{black}{Nevertheless, in continuous-time setting, the design of such  extended predictors gives rise to  a new impediment  that is  illustrated in the following argument. Let us specialize $P_i$ to $i=2$ to obtain the following predictor state 
	\textcolor{black}{\begin{align*}
		P_2(t) =&\ \textstyle e^{\mathbf{S}_2 \tau} \xi_2(t) + \beta a_{2,0} \int_{t - \tau}^{t}e^{\mathbf{S}_2(t-s)}P_0(s)ds \notag\\
		& \textstyle    +  \beta a_{2,1} \int_{t - \tau}^{t}e^{\mathbf{S}_2(t-s)} P_{1}(s)ds,
	\end{align*}}
assuming that \textbf{Agent} $\boldsymbol{1,}$ at time $t = t^* - \tau$ transmits the information $\bar{P}_1(t) = \int_{t}^{t+\tau}e^{\mathbf{S}_2(t+\tau-s)}P_1(s)ds$ directly to its neighbor \textbf{Agent} $\boldsymbol{2,}$ who can thereby receive $\bar{P}_1(t^*-\tau)$ at $t = t^*$.  It becomes obvious that a \emph{key obstacle} to practically realize an \emph{exact} continuous-time predictor through the computation of $\bar{P}_1(t)$ is the \emph{absolute necessity} to predict the future signals in the entire continuum of points in the interval $[t, t+\tau]$, i.e., infinitely many values.} 

\noindent \textbf{Feasibility in discrete-time setting.} Compared with the continuous-time setting, where future information is obtained via an \emph{integral operator,} the information outflow predictor design for discrete-time systems can be implemented through iteration, allowing future information to be obtained step by step in a recursive manner. The specific implementation details are provided in the design of the predictors and distributed predictors presented in the sequel.

\section{Distributed observer and predictor design}\label{distributed-obs}
In this section, we introduce  a prediction-based distributed observer design for discrete-time heterogeneous MASs.

In the following, $\tau_{i,j}\in\mathbb{N}^+$ denotes the communication delay from \textbf{Agent~$\boldsymbol{i}$} to \textbf{Agent~$\boldsymbol{j.}$} Next, we redefine the usual notion of the longest path from node $i$ to node $j$ as follows:
\begin{definition}[Extended prediction horizon]\label{def1}
     Let the longest path from  node $i$ to node $j$ be defined as
    \begin{align}
    	\textstyle \pi_{i,j} = \arg \max_{\pi \in \Pi_{i,j}}\{w(\pi) - \mathbb{L}(\pi)\},
    \end{align}
    and let the corresponding weight $w_{i,j},$ be defined as
    \begin{align}
    	w_{i,j} = \textstyle \max_{\pi \in \Pi_{i,j}}\{w(\pi) - \mathbb{L}(\pi)\}, \label{equ: longest_path}
    \end{align}
    where the standard edge weights, $a_{i,j}$, are replaced by the corresponding communication delay lengths, $\tau_{i,j}$. Namely, the weight of a path $\pi=(v_1,v_2,\ldots,v_m)$ is defined as 
     \begin{align}
     	w(\pi) =\textstyle  \sum_{i=1}^{m-1}\tau_{v_i,v_{i+1}}.
     \end{align}
    Then, the  extended prediction horizon  of  information transmitted from \textbf{Agent~$\boldsymbol{i}$} to \textbf{Agent~$\boldsymbol{j}$} is denoted by
    \begin{align}
    	\mathbb{P}_{i,j} = \{0,1, \ldots, w_{j,N}\},\ j =1,2,\ldots, N,\ i<j. \label{equ:prediction_horizon_heter}
    \end{align}
\end{definition}
\subsection{Infeasibility of standard predictor design and extended prediction horizon in discrete-time}\label{infeas}
\textcolor{black}{We begin with a three-agent network together with the exosystem. The purpose of this example is to show why a standard predictor is not directly implementable under communication delays, and why additional future information must be generated and transmitted. }
\begin{figure}[!h]
	\centering
	\begin{tikzpicture}[scale=0.4]
		
		\draw[fill=Maroon!50,very thick] (0,0) circle (0.5) node {\scriptsize $0$};
		
		\draw[fill=NavyBlue!50,very thick] (2,0) circle (0.5) node {\scriptsize $1$};
		\draw[fill=NavyBlue!50,very thick] (4,0) circle (0.5) node {\scriptsize $2$};
		\draw[fill=NavyBlue!50,very thick] (6,0) circle (0.5) node {\scriptsize $3$};
		
		\draw[->, dashed, Blue, very thick, >=latex]  (0,0.5) to [out=30,in=150] (4,0.5);
		\draw[->, dashed, Blue, very thick, >=latex]  (0,0.5) to [out=30,in=150] (6,0.5);
		\draw[->, dashed, Blue, very thick, >=latex]  (2,-0.5) to [out=-30,in=-150] (6,-0.5);
		
		\draw[->, dashed, Blue, very thick, >=latex]  (0.5,0) ->  (1.5,0) ;
		
		\draw[->, dashed, darkgray, very thick, >=latex]  (2.5,0) ->  (3.5,0);
		
		\draw[->, dashed, darkgray, very thick, >=latex]  (4.5,0) ->  (5.5,0);
	\end{tikzpicture}
	\caption{Communication Topology.}
	\label{fig_example}
\end{figure}

\textcolor{black}{A simple communication topology  shown in Fig. \ref{fig_example} is used as an illustrative case study. Based on the standard distributed observer design, we construct the following prediction-based distributed observer \cite{huang2016cooperative}:
\begin{align}	
	\xi_1(k+1) =&\ \textstyle S\xi_1(k) -  \beta S a_{1,0} (\xi_1(k) - \Upsilon_1(k-\tau_{0,1} )), \label{equ:example1}\\
	\xi_2(k+1) =&\ \textstyle S\xi_2(k) -  \beta S a_{2,0} (\xi_2(k) - \Upsilon_2(k-\tau_{0,2} )) \notag\\
	&  \textstyle - \beta Sa_{2,1} (\xi_2(k) - \Xi_{1,2}(k-\tau_{1,2})), \label{equ:example2}\\
	\xi_3(k+1) =&\ \textstyle S\xi_3(k) -  \beta S a_{3,0} (\xi_3(k) - \Upsilon_3(k-\tau_{0,3} )) \notag\\
	&  \textstyle - \beta S a_{3,1} (\xi_3(k) - \Xi_{1,3}(k-\tau_{1,3}))\notag\\
	&  \textstyle - \beta S a_{3,2} (\xi_3(k) - \Xi_{2,3}(k-\tau_{2,3})), \label{equ:example3}
\end{align}
where $\Upsilon_i(k)$ denotes the exosystem prediction associated with \textbf{Agent} $\boldsymbol{i,}$ and $\Xi_{i,j}(k)$ denotes the prediction generated by \textbf{Agent} $\boldsymbol{i}$ and used by \textbf{Agent} $\boldsymbol{j,}$  and  $\beta$ is the coupling gain.  These predictors are introduced so that each \textbf{sender} can predict its own future information and transmit it to its neighboring \textbf{receivers,} thereby compensating for communication delays.}

\textcolor{black}{Based on \eqref{equ:example1}--\eqref{equ:example2}, through step-by-step iteration, the  $\tau_{i,j}$-step ahead predictions for \textbf{Agent~$\boldsymbol{1}$} and \textbf{Agent~$\boldsymbol{2}$} are given
\begin{align}
	&\ \xi_1(k + \tau_{1,2}) 	= \textstyle 	\hat{S}_1^{\tau_{1,2}}\xi_1(k) + \beta \sum_{j=1}^{\tau_{1,2}}  \hat{S}_1^{j-1} a_{1,0} S \notag\\
    & \textstyle \qquad\qquad\qquad \times  \Upsilon_1(k - \tau_{0,1} + \tau_{1,2} - j ), \label{equ:example_pre_1} \\
	&\ \xi_1(k + \tau_{1,3}) = \textstyle \hat{S}_1^{\tau_{1,3}}\xi_1(k) + \beta \sum_{j=1}^{\tau_{1,3}}  \hat{S}_1^{j-1} a_{1,0} S\notag\\
    & \textstyle \qquad\qquad\qquad \times \Upsilon_1(k - \tau_{0,1} + \tau_{1,3} - j  ), \label{equ:example_pre_2}\\
	&\ \xi_2(k + \tau_{2,3}) = \textstyle \hat{S}_2^{\tau_{2,3}}\xi_2(k) + \beta \sum_{j=1}^{\tau_{2,3}}  \hat{S}_2^{j-1} a_{2,0} S \notag\\
    & \textstyle \qquad\qquad\qquad \times \Upsilon_2(k - \tau_{0,2} + \tau_{2,3} - j ) + \beta \sum_{j=1}^{\tau_{2,3}} \hat{S}_2^{j-1} a_{2,1} S \notag\\
	& \textstyle \qquad\qquad\qquad \times  \Xi_{1,2}(k - \tau_{1,2} + \tau_{2,3} - j ), \label{equ:example_pre_3}
\end{align}
where $\hat{S}_1 = S - \beta Sa_{1,0}$ and $\hat{S}_2 = S - \beta S a_{2,0} - \beta S a_{2,1}$.}

\textcolor{black}{However, \emph{as in the continuous-time case}, such a classical predictor is not directly implementable in the discrete-time setting when communication delays are present.  To see this, suppose that \textbf{Agent~$\boldsymbol{2}$} intends to compute $\xi_2(k^*+\tau_{2,3})$ at time $k=k^*$. According to \eqref{equ:example_pre_3}, this computation requires information from the upper-layer nodes over the time intervals $[k^*-\tau_{\ell,2},k^*-\tau_{\ell,2}+1,\ldots,k^*-\tau_{\ell,2} + \tau_{2,3} -1]$, $\ell\in\{0,1\}$.
In practice, however, the information sent by the exosystem and \textbf{Agent~$\boldsymbol{1}$} reaches \textbf{Agent~$\boldsymbol{2}$} only after $\tau_{0,2}$ and $\tau_{1,2}$ time steps, respectively. Thus, at time $k=k^*$,                      \textbf{Agent~$\boldsymbol{2}$} can only access the exosystem information available at $k^*-\tau_{0,2}$ and the information from \textbf{Agent~$\boldsymbol{1}$} available at $k^*-\tau_{1,2}$.  Consequently, the prediction $\Xi_{2,3}(k)=\xi_2(k+\tau_{2,3})$ is unavailable at time $k=k^*$. This implies that the upper-layer nodes must additionally provide future predicted information over the window $[k, k+1, \ldots, k+\tau_{2,3}-1]$.}

\textcolor{black}{To identify precisely the required additional future information for exact prediction, we first consider the ideal predictor states 
$\Xi_{i,j}(k) =	\xi_i(k + \tau_{i,j})$, $i=1,2$, $i<j$ in order to   determine the structure of the  information transmission flow needed for the design of  implementable predictors over an extended prediction horizon.  From  \eqref{equ:example_pre_1}--\eqref{equ:example_pre_3}, the following requirements can be identified:
\begin{itemize}[leftmargin=*, itemsep=0pt]
	\item First, according to \eqref{equ:example_pre_1}--\eqref{equ:example_pre_3}, the exosystem must  additionally generate the following future information to enable the design of  the predictors for \textbf{Agent~$\boldsymbol{1}$} and \textbf{Agent~$\boldsymbol{2}$}:
	\begin{itemize}[leftmargin=0pt]
		\item[] \textbf{Agent~$\boldsymbol{1:}$} $\Upsilon_1(k + 1)$, $\Upsilon_1(k + 2)$, $\cdots$, $\Upsilon_1(k + \max\{\tau_{1,2}, \tau_{1,3}  \} $ $ - 1)$;
		\item[] \textbf{Agent~$\boldsymbol{2:}$} $\Upsilon_2(k + 1)$, $\Upsilon_2(k + 2)$, $ \cdots$, $\Upsilon_2(k +\tau_{2,3} - 1)$;
	\end{itemize}
	\item Second, according to \eqref{equ:example_pre_3},  \textbf{Agent~$\boldsymbol{1}$} must  additionally generate the following information to allow for the design of   the predictor for \textbf{Agent~$\boldsymbol{2:}$}
	\begin{align*}
		\Xi_{{1,2 }}(k + 1),\ 	\Xi_{1,2}(k + 2),\ \cdots, \Xi_{1,2}(k + \tau_{2,3} - 1).
	\end{align*}
\end{itemize}
Moreover, from the distributed observer dynamics of $\xi_1(k)$ and the ideal predictor $\Xi_{1,2}(k)$, we have
\begin{align*}
	\Xi_{1,2}(k + 1) = \textstyle  \hat{S}_1 \Xi_{1,2}(k)  + \beta a_{1,0} S \Upsilon_1(k -\tau_{0,1} + \tau_{1,2} ).
\end{align*} 
Hence, generating $\Xi_{1,2}(k+1)$ further requires the exosystem to provide the predicted value 
$\Upsilon_1(k-\tau_{0,1}+\tau_{1,2})$. 
In addition, since the ideal predictor $\Xi_{2,3}(k)$ depends on $\Xi_{1,2}(k+s)$, $s=1,2,\ldots,\tau_{2,3}-1$, repeated iteration gives
\begin{align*}
	&\ \Xi_{1,2}(k+s) \notag\\
	=&\ \textstyle \hat{S}_1^{s}\Xi_{1,2}(k) + \beta  \sum_{j=1}^{s} \hat{S}_1^{j-1} a_{1,0} S \Upsilon_1(k - \tau_{0,1} + \tau_{1,2} + s -  j),
\end{align*} 
Therefore, the computation of $\Xi_{1,2}(k+s)$ requires the exosystem predictions over the relative horizons 
$[\tau_{1,2},\tau_{1,2}+1,\ldots,\tau_{1,2}+s-1]$, for $s=1,2,\ldots,\tau_{2,3}-1$.  Combining these requirements, the additional future information that should be provided by exosystem and \textbf{Agent~$\boldsymbol{1}$} is summarized as follows:
\begin{itemize}
	\item Exosystem:
	\begin{align*}
		&\ \Upsilon_1(k + 1),\ 	\Upsilon_1(k + 2),\ \cdots,\\
		&\  \Upsilon_1(k + \max\{ \tau_{1,2} + \tau_{2,3}-2 , \tau_{1,3} -1 \} );\\
		&\ \Upsilon_2(k + 1),\ 	\Upsilon_2(k + 2),\ \cdots, \Upsilon_2(k +\tau_{2,3} - 1).
	\end{align*}
	\item \textbf{Agent~$\boldsymbol{1:}$} $\Xi_{1,2}(k + 1)$, $\Xi_{1,2}(k + 2)$, $\cdots$, $\Xi_{1,2}(k + \tau_{2,3} - 1)$.
\end{itemize}}

\textcolor{black}{To generate the above additional future information in a causal and implementable manner, we introduce the following distributed predictors:
\begin{align*}
	\Upsilon_{i,s}(k) =&\ S\Upsilon_{i,s-1}(k),\quad i=1,2,\\
	\Xi_{1,2,s}(k) =&\ \textstyle \hat{S}_1\Xi_{1,2,s-1}(k) + \beta a_{1,0} S \Upsilon_{1,\tau_{1,2} + s -  1}(k - \tau_{0,1}),
\end{align*} 
where $\Upsilon_{i,s}(k) = \Upsilon_i(k+s)$ and $\Xi_{1,2,s}(k) = \Xi_{1,2}(k+s)$ with $\Upsilon_{i} = \Upsilon_{i,0}(k)$ and $\Xi_{1,2,0}(k) = \Xi_{1,2}(k)$. Accordingly, the predictors in \eqref{equ:example_pre_1}--\eqref{equ:example_pre_3} can be rewritten in the following causal form:
\begin{align*}
	\Xi_{1,2}(k) =&\ \textstyle
	\hat{S}_1^{\tau_{1,2}}\xi_1(k) + \beta \sum_{j=1}^{\tau_{1,2}}  \hat{S}_1^{j-1} a_{1,0} S \Upsilon_{1,\tau_{1,2} - j}(k - \tau_{0,1}), \\
	\Xi_{1,3}(k) 
	=&\ \textstyle
	\hat{S}_1^{\tau_{1,3}}\xi_1(k) + \beta \sum_{j=1}^{\tau_{1,3}}  \hat{S}_1^{j-1} a_{1,0} S \Upsilon_{1,\tau_{1,3} - j}(k - \tau_{0,1}), \\
	\Xi_{2,3}(k) =&\ \textstyle
	\hat{S}_2^{\tau_{2,3}}\xi_2(k) + \beta \sum_{j=1}^{\tau_{2,3}}  \hat{S}_2^{j-1} a_{2,0} S \Upsilon_{2,\tau_{2,3} - j }(k - \tau_{0,2}) \notag\\
	& \textstyle + \beta \sum_{j=1}^{\tau_{2,3}} \hat{S}_2^{j-1} a_{2,1} S \Xi_{1,2,\tau_{2,3} - j}(k - \tau_{1,2} ).
\end{align*}
It remains to develop a direct method for determining the required range of the prediction index $s$. 
For this purpose, \eqref{equ:prediction_horizon_heter} in Definition \ref{def1}, provides the extended prediction horizon, which specifies the required range of $s$. In this example, the  exosystem needs to provide information $\Upsilon_{1,s}$, $s\in\mathbb{P}_{0,1} = \{0,1,\ldots, w_{1,3}\}$ to \textbf{Agent~$\boldsymbol{1}$} and information $\Upsilon_{2,s}$, $s\in\mathbb{P}_{0,2} =\{0,1,\ldots,w_{2,3}\}$ to \textbf{Agent~$\boldsymbol{2,}$} where $w_{1,3}$, $w_{2,3}$ are defined in \eqref{equ: longest_path}. In addition, \textbf{Agent~$\boldsymbol{1}$} also needs to provide information $\Xi_{1,2}(k+s)$, $s\in\mathbb{P}_{1,2} = \{0,1,\ldots,w_{2,3}\}$ to \textbf{Agent~$\boldsymbol{2.}$}}
\begin{figure}[!h]
	\centering
	\begin{tikzpicture}[scale=0.4]
		\draw[fill=Maroon!50,very thick] (-2 ,0) circle (0.5) node {\scriptsize $0$};
		\draw[fill=NavyBlue!50,very thick] (0,0) circle (0.5) node {\scriptsize $1$};
		
		\draw[fill=NavyBlue!50,very thick] (2,0) circle (0.5) node {\scriptsize $2$};
		
		\draw[->, dashed, Blue, very thick, >=latex]  (-1.5,0) ->  (-0.5,0) ;
		
		\draw[->, dashed, Blue, very thick, >=latex]  (0,0.5) to [out=30,in=150] (2,0.5);
		\draw[->, dashed, Blue, very thick, >=latex]  (2,-0.5) to [out=-150,in=-30] (0,-0.5);
	\end{tikzpicture}
	\caption{\textcolor{black}{Communication topology with a directed cycle.}}
	\label{fig_structure2}
\end{figure}

\begin{remark}[On directed cycles graphs]
	\textcolor{black}{The feasibility of the proposed prediction procedure relies on Assumption~\ref{assum1}.  If the communication topology contains directed cycles, exact prediction may result in an infinite forward dependency.  Consider the two-agent topology in Fig.~\ref{fig_structure2} with uniform communication delays $\tau_{i,j}=2$. Following \cite{huang2016cooperative}, the distributed observer is given by:
	\begin{align}	
		\xi_1(k+1) =&\ \textstyle S\xi_1(k) - \beta S a_{1,0} (\xi_1(k) - \Upsilon_{1}(k-2)) \notag\\
		&\textstyle - \beta S a_{1,2} (\xi_1(k) - \Xi_{2,1}(k-2)), \label{equ:x1}\\
		\xi_2(k+1) =&\ \textstyle S\xi_2(k) -  \beta Sa_{2,1} (\xi_2(k) - \Xi_{1,2}(k-2)), \label{equ:x2}
	\end{align}
	 Let $\Xi_{1,2}(k)=\xi_1(k+2)$ and $\Xi_{2,1}(k)=\xi_2(k+2)$ be the ideal predictors. 
From \eqref{equ:x1} and \eqref{equ:x2}, one obtains
		\begin{align*}	
		\Xi_{1,2}(k) =&\ \textstyle \hat{S}_1^2\xi_1(k) + \beta  \hat{S}_1 a_{1,0} S \Upsilon_{1}(k-2) + \beta \hat{S}_1 a_{1,2}S\Xi_{2,1}(k-2) \\
		& \textstyle + \beta  a_{1,0} S \Upsilon_{1}(k + 1 - 2) + \beta  a_{1,2} S \Xi_{2,1}(k + 1 - 2),\\
		\Xi_{2,1}(k) = &\ \textstyle \hat{S}_2^2\xi_2(k) + \beta\hat{S}_2a_{2,1}S\Xi_{1,2}(k-2)  \\
        & + \beta a_{2,1}S\Xi_{1,2}(k + 1 - 2),
	\end{align*}
	where $\hat{S}_1 = S - \beta a_{1,0}S - \beta a_{1,2} S$ and $\hat{S}_2 = S - \beta a_{2,1} S$. Thus, $\Xi_{1,2}(k)$ depends on $\Xi_{2,1}(k + 1 - 2)$, while $\Xi_{2,1}(k)$ depends on $\Xi_{1,2}(k + 1 - 2)$.  Due to the communication delays, however, these two terms are not directly available to \textbf{Agent~$\boldsymbol{1}$}  and \textbf{ Agent~$\boldsymbol{2,}$} respectively.  To generate the missing terms recursively, one would need
	\begin{align*}
		\Xi_{1,2}(k+1) = &\ \textstyle \hat{S}_1\Xi_{1,2}(k ) + \beta  a_{1,0} S \Upsilon_{1}(k + 2- 2) \\
		& \textstyle  + \beta  a_{1,2} S \Xi_{2,1}(k + 2 - 2),\\
		\Xi_{2,1}(k + 1) =&\ \textstyle \hat{S}_2 \Xi_{2,1}(k) + \beta a_{2,1}S\Xi_{1,2}(k + 2 - 2).
	\end{align*} 
	Now, to obtain $\Xi_{1,2}(k)$, \textbf{Agent~$\boldsymbol{1}$} requires \textbf{Agent~$\boldsymbol{2}$} to send $\Xi_{2,1}(k+1)$ in advance. However, $\Xi_{2,1}(k+1)$ itself depends on $\Xi_{1,2}(k+2 -2)$, which in turn requires \textbf{Agent~$\boldsymbol{1}$} to send $\Xi_{1,2}(k+2)$ in advance. By repeating this argument, an infinite forward dependence arises. If any of the required information is unavailable, exact delay compensation fail to be achieved. }
\end{remark}
\begin{remark}\textcolor{black}{Although Assumption~\ref{assum1} is restrictive to some extent, it can be accommodated in practical implementation. In general, the communication topology is assumed to contain a directed spanning tree, including strongly connected topologies. One may select an acyclic subgraph that preserves the required information flow and ignore the remaining edges. From this perspective, the proposed algorithm is applicable to all the aforementioned communication topologies. This strategy is feasible for most connected directed communication topologies, because the existence of a directed acyclic subgraph is a mild requirement for a connected topology.}
\end{remark}

\subsection{Generalization of the prediction-based distributed observer design}
\textcolor{black}{ Following the arguments presented in Section \ref{infeas},  we now generalize the prediction-based distributed observer design.}

Firstly, to satisfy the design requirements of each agent, the exosystem \eqref{equ:exosystem} should additionally construct the  predictor:
\begin{align}
	\Upsilon_i(k) =&\ \textstyle \upsilon(k+\tau_{0,i}) = S^{\tau_{0,i}} \upsilon(k), \label{equ:predictor_upsilon_heter}
\end{align}
and distributed predictor:
\begin{align}
	\Upsilon_{i,s}(k) =&\ \textstyle  \Upsilon_{i}(k+s) = S^s\Upsilon_{i}(k), \label{equ:distributed_predictor_upsilon_s_heter}
\end{align}
\textcolor{black}{where the predictor $\Upsilon_i(k)\in\mathbb{R}^q$ denotes the $\tau_{0,i}$-step-ahead prediction of $\upsilon(k)$ and the distributed predictor $\Upsilon_{i,s}(k)\in\mathbb{R}^q$, $s\in\mathbb{P}_{0,i}$, denotes the $s$-step-ahead prediction of $\Upsilon_i(k)$. Both the predictor $\Upsilon_i(k)$ and distributed predictor $\Upsilon_{i,s}(k)$ are constructed by the exosystem and  transmitted to \textbf{Agent} $\boldsymbol{i}$. In particular, $\Upsilon_{i,0}(k) = \Upsilon_{i}(k)$.}

Furthermore, for \textbf{Agent} $\boldsymbol{i}$, $i=1,2,\ldots,N$,  the following distributed observers are constructed:
\begin{align}	
	\xi_i(k+1) =&\ \textstyle S\xi_i(k) -  \beta S a_{i,0} (\xi_i(k) - \Upsilon_i(k-\tau_{0,i} )) \notag\\
	&  \textstyle - \beta S \sum_{j=1}^{i-1}a_{i,j} (\xi_i(k) - \Xi_{j,i}(k-\tau_{j,i})) \notag\\
	=&\ \textstyle \hat{S}_i\xi_i(k) + \beta  a_{i,0} S \Upsilon_i(k-\tau_{0,i} ) \notag \\
	& \textstyle + \beta  \sum_{j=1}^{i-1}a_{i,j} S \Xi_{j,i}(k-\tau_{j,i}). \label{equ:distributed_ob_heter}
\end{align}
together with the predictor for \textbf{Agent} $\boldsymbol{i}$, $\boldsymbol{i}=1,2,\ldots,N-1$
\begin{align}
	\Xi_{i,r}(k) =&\ \textstyle \hat{S}_i^{\tau_{i,r}}\xi_i(k) +  \beta \sum_{j = 1}^{\tau_{i,r}}\hat{S}_i^{\tau_{i,r}-j}  a_{i,0} S\Upsilon_{i,j-1}(k-\tau_{0,i}) \notag\\
	& \textstyle +  \beta \sum_{j=1}^{i-1}  \sum_{\ell=1}^{\tau_{i,r}}\hat{S}_i^{\tau_{i,r}-\ell} a_{i,j} S \Xi_{j,i,\ell-1}( k-\tau_{j,i}), \label{equ:predictor_xi_heter}
\end{align}
and the distribtued predictor for \textbf{Agent} $\boldsymbol{i}$, $\boldsymbol{i}=1,2,\ldots,N-1$
\begin{align}
	\Xi_{i,r,s}(k) =&\ \textstyle \hat{S}_{i}\Xi_{i,r,s-1}(k) +  \beta a_{i,0} S \Upsilon_{i,\tau_{i,r} + s-1}(k-\tau_{0,i}) \notag \\
	& \textstyle + \beta \sum_{j=1}^{i-1} a_{i,j} S \Xi_{j,i,\tau_{i,r} + s-1}(k-\tau_{j,i}), \label{equ:distributed_predictor_xi_s_heter}
\end{align}
where  $\xi_i(k)\in\mathbb{R}^q$ represents the distributed observer state,  \textcolor{black}{the predictor  $\Xi_{j,i}(k)\in\mathbb{R}^{q}$ denotes the $\tau_{j,i}$-step-ahead prediction of $\xi_j(k)$ and $\Xi_{j,i,s}\in\mathbb{R}^q$, $s\in\mathbb{P}_{j,i}$, denotes the $s$-step-ahead prediction of $\Xi_{j,i}(k)$. Both the predictor $\Xi_{j,i}(k)$ and the distributed predictor $\Xi_{j,i,s}(k)$ are constructed by the \textbf{Agent} $\boldsymbol{j}$, $\boldsymbol{j}=1,2,\ldots,i-1$ and transmitted to \textbf{Agent} $\boldsymbol{i}$.} We recall that $\beta$ is the coupling gain and $\hat{S}_i = S - \beta\sum_{j=0}^{i-1}a_{i,j}S$. In particular, $\Xi_{i,r,0}(k) = \Xi_{i,r}(k)$.

Additionally, due to the existence of communication delay, when $k < \tau_{j,i}$, the information of \textbf{Agent} $\boldsymbol{j}$ has not been delivered to \textbf{Agent} $\boldsymbol{i}$ yet. 
Therefore, we assume $\Upsilon_{i}(k)=0$, $\Upsilon_{i,s}(k)=0$, $\Xi_{i,r}(k) = 0$, $\Xi_{i,r,s}(k) = 0$, for $k<0$. Consequently, for $k < \min\{\tau_{\ell,i}|\ell=0,1,\ldots,i-1\}$, the dynamics of distributed observers, predictors and distributed predictors for \textbf{Agent} $\boldsymbol{i}$ are transformed as follows: 
\begin{gather}	
	\begin{array}{c}
	\xi_i(k+1) = \textstyle  \hat{S}_i\xi_i(k), \\
		\Xi_{i,r}(k) = \hat{S}_i^{\tau_{i,r}}\xi_i(k),\	\Xi_{i,r,s}(k) =  \hat{S}_{i}\Xi_{i,r,s-1}(k).
	\end{array}  \label{equ:without_information}
\end{gather}

\begin{table*}[!t]
\centering
\footnotesize
\setlength{\tabcolsep}{4.5pt}
\renewcommand{\arraystretch}{1.25}
\caption{\textcolor{black}{Communication load of the proposed prediction-based algorithm under full-prediction-window transmission.}}
\label{tab:communication_load}
\begin{tabular}{@{}
>{\raggedright\arraybackslash}p{0.09\textwidth}
>{\raggedright\arraybackslash}p{0.34\textwidth}
>{\raggedright\arraybackslash}p{0.255\textwidth}
>{\raggedright\arraybackslash}p{0.26\textwidth}
@{}}
\toprule
Node & Received information & Sent information & Number of $q$-dimensional arrays \\
\midrule
Exosystem
&
None
&
For each $(0,i)\in\mathcal E$: 
$\{\Upsilon_{i,s}(k)\}_{s\in P_{0,i}}$.
&
Received: $0$.

Sent: $(w_{i,N}+1)$ per edge $(0,i)$.
\\
\midrule
Agent $i$, $1\leq i\leq N$
&
If $(0,i)\in\mathcal E$: 
$\{\Upsilon_{i,s}(k-\tau_{0,i})\}_{s\in P_{0,i}}$.

For each $(j,i)\in\mathcal E$, $j < i$: 
$\{\Xi_{j,i,s}(k-\tau_{j,i})\}_{s\in P_{j,i}}$.
&
For each $(i,r)\in\mathcal E$: 
$\{\Xi_{i,r,s}(k)\}_{s\in P_{i,r}}$.
&
Received: $d_i^-(w_{i,N}+1)$.

Sent: $\sum_{r\in\mathcal N_i^+}(w_{r,N}+1)$.
\\
\bottomrule
\end{tabular}

\vspace{1mm}
\begin{tablenotes}
\footnotesize
\item[] Note: Here, $\mathcal N_i^-=\{j\in\{0,1,\ldots,i-1\}:a_{i,j}\neq0\}$, $\mathcal N_i^+=\{r\in\{i+1,\ldots,N\}:a_{r,i}\neq0\}$, and $d_i^-$ denotes the cardinality of $\mathcal N_i^-$. 
\end{tablenotes}
\end{table*}

\begin{figure}[!t]
\centering
\resizebox{0.47\textwidth}{!}{
\begin{tikzpicture}[>=latex, line width=0.8pt, every node/.style={font=\small}]

\def\yexo{2}
\def\yj{1}
\def\yi{0}

\def\xstart{0}
\def\xend{5.2}

\draw[->] (\xstart,\yexo) -- (\xend,\yexo);
\draw[->] (\xstart,\yj)   -- (\xend,\yj);
\draw[->] (\xstart,\yi)   -- (\xend,\yi);

\node[left] at (\xstart,\yexo) {\tiny Exosystem};
\node[left] at (\xstart,\yj)   {\tiny Agent $j$};
\node[left] at (\xstart,\yi)   {\tiny Agent $i$};

\foreach \y in {\yexo,\yj,\yi}{
    \draw (0,\y+0.045) -- (0,\y-0.045);
}

\node[below] at (0,\yi) {\tiny $k$};

\def\xtauo{1}   
\def\xtauj{3.2}   

\foreach \y in {\yexo,\yj,\yi}{
    \draw (\xtauj,\y+0.045) -- (\xtauj,\y-0.045);
    \draw (\xtauo,\y+0.045) -- (\xtauo,\y-0.045);
}

\node[below] (taujlabel) at (\xtauj,\yi) {\tiny $k+\tau_{j,i}$};
\node[below] (tauolabel) at (\xtauo,\yi) {\tiny $k+\tau_{0,i}$};

\def\xso{2.2}   
\def\xsj{4.4}   

\foreach \y in {\yexo,\yj,\yi}{
    \draw (\xsj,\y+0.045) -- (\xsj,\y-0.045);
    \draw (\xso,\y+0.045) -- (\xso,\y-0.045);
}

\node[below] at (\xsj,\yi) {\tiny $k+\tau_{j,i}+s$};
\node[below] at (\xso,\yi) {\tiny $k+\tau_{0,i}+s$};

\node[below] (upsilonlabel) at (0,\yexo) {\tiny $\upsilon(k)$};
\node[below] (xijlabel) at (0,\yj)        {\tiny $\xi_j(k)$};

\node[below] (Upsilonlabel) at (\xtauo,\yexo) {\tiny $\Upsilon_i(k)$};
\node[below] (Xijlabel) at (\xtauj,\yj)   {\tiny $\Xi_{j,i}(k)$};

\node[below] (Upsilonslabel) at (\xso,\yexo) {\tiny $\Upsilon_{i,s}(k)$};
\node[below] (Xijslabel) at (\xsj,\yj)   {\tiny $\Xi_{j,i,s}(k)$};

\draw[->,blue, thick] (xijlabel.east) to (Xijlabel.west);
\draw[->,blue, thick] (Xijlabel.east) to (Xijslabel.west);
\draw[->,blue, thick] (upsilonlabel.east) to (Upsilonlabel.west);
\draw[->,blue, thick] (Upsilonlabel.east) to (Upsilonslabel.west);

\node[
    draw,
    dashed, red,
    rounded corners=3pt,
    inner xsep=4pt,
    inner ysep=3pt,
    fit=(Xijlabel)(Xijslabel)
] (Xigroup) {};

\draw[->, dashed, red, thick] 
(Xigroup.south) .. controls +(0,-0.35) and +(0,0.35) .. node[pos=0.55,  right=8pt, red] {\tiny $\tau_{j,i}$} (taujlabel.north);

\node[
    draw,
    dashed, red,
    rounded corners=3pt,
    inner xsep=5pt,
    inner ysep=4pt,
    fit=(Upsilonlabel)(Upsilonslabel)
] (Ugroup) {};

\draw[->, dashed, red, thick] 
(Ugroup.south) .. controls +(0,-0.35) and +(0,0.25) .. node[pos=0.8, right, red] {\tiny $\tau_{0,i}$} (tauolabel.north);

\end{tikzpicture}
}
\caption{\textcolor{black}{Timeline of the prediction transmission mechanism.}}
\label{fig_timeline_basic}
\end{figure}
 \textcolor{black}{In Fig. \ref{fig_timeline_basic}, we present the timeline of the information flow. The three horizontal lines represent the timelines of the exosystem, \textbf{Agent} $\boldsymbol{j}$, and \textbf{Agent} $\boldsymbol{i}$, respectively. The blue solid arrows indicate the dependency relationships among the variables. The labels are placed at the time instants corresponding to the states they predict. The red dashed boxes indicate the groups of signals that are transmitted together, while the red dashed arrows indicate their arrival instants at \textbf{Agent} $\boldsymbol{i}$. In particular, $\Upsilon_i(k)$ and $\Upsilon_{i,s}(k)$ are generated by the exosystem at time $k$ and sent together to \textbf{Agent} $\boldsymbol{i}$, arriving at time $k+\tau_{0,i}$. Similarly, $\Xi_{j,i}(k)$ and $\Xi_{j,i,s}(k)$ are generated by \textbf{Agent} $\boldsymbol{j}$ at time $k$ and sent together to \textbf{Agent} $\boldsymbol{i}$, arriving at time $k+\tau_{j,i}$. It should be noted that the ordering $\tau_{0,i}<\tau_{j,i}$ in Fig. \ref{fig_timeline_basic} is only adopted for a clearer graphical illustration as the  proposed method does not require such an ordering.}

 \textcolor{black}{\textbf{Communication overhead.} Furthermore, we provide an explicit analysis of the communication overhead of the proposed algorithm. The information received and sent by each node under the full-prediction-window transmission mechanism is summarized in Table~\ref{tab:communication_load}. Specifically, according to the definition of the extended prediction horizon in \eqref{equ:prediction_horizon_heter}, $P_{j,i}=\{0,1,\ldots,w_{i,N}\}$.  Therefore, each edge $(j,i)$ transmits exactly $w_{i,N}+1$ arrays, each of dimension $q$. More precisely, the exosystem sends $\{\Upsilon_{i,s}(k)\}_{s\in P_{0,i}}$. Upper-layer \textbf{Agent} $\boldsymbol{j}$, $\boldsymbol{j}\in\{1,2,N-1\}$, sends $\{\Xi_{j,i,s}(k)\}_{s\in P_{j,i}}$. Then the number of $q$-dimensional arrays received by \textbf{Agent} $\boldsymbol{i}$ is $d_i^-(w_{i,N}+1)$. Similarly, the number of $q$-dimensional arrays sent by \textbf{Agent} $\boldsymbol{i}$ is $\sum_{r\in\mathcal N_i^+}(w_{r,N}+1)$. Consequently, the maximum network communication load is $\mathcal C_{\rm max} = q\sum_{i=1}^N d_i^-(w_{i,N}+1)$.}

\textcolor{black}{Then, we discuss how the communication load grows with
the size of the graph. Let
$\tau_{\max}=\max_{(j,i)\in\mathcal E}\tau_{j,i}$,
and let $\mathbb{L}_{\max}$ denote the maximum number of edges in a directed path
of the DAG. Since the modified longest-path weight satisfies
$w_{i,N}\leq (\tau_{\max}-1)\mathbb{L}_{\max}$,
the maximum communication load satisfies
$\mathcal C_{\rm max} \leq q|\mathcal E|(1+(\tau_{\max}-1)\mathbb{L}_{\max}) = O(q|\mathcal{E}|\tau_{\max}\mathbb{L}_{\max})$.
Thus, the maximum overhead grows linearly with the dimension $q$ of the exosystem, 	linearly with the number of communication edges $|\mathcal E|$, and linearly with the maximum communication delay $\tau_{\max}$, and linearly with the graph depth $\mathbb{L}_{\max}$. For a sparse chain-like graph, where $|\mathcal E|=O(N)$ and $\mathbb{L}_{\max}=O(N)$, the worst-case load is $O(q\tau_{\max}N^2)$. For a dense DAG, where $|\mathcal E|=O(N^2)$ and $\mathbb{L}_{\max}=O(N)$, the worst-case load is 	$O(q\tau_{\max}N^3)$. For shallow DAGs with bounded depth, the communication load 	remains $O(q\tau_{\max}|\mathcal E|)$.}

\section{Exact compensation of communication delays and distributed observers consensus}\label{adsec}
\subsection{Exact compensation property of predictors}
\textcolor{black}{In the following, we present a technical lemma to show that the constructed predictors \eqref{equ:predictor_upsilon_heter}--\eqref{equ:predictor_xi_heter} can \emph{exactly compensate} for the communication delays \emph{after a period of inexact prediction.} This property will be used to prove the consensus of the distributed observers.} To this end, we first define the prediction errors as
\begin{align}
	\tilde{\Upsilon}_{i}(k) =&\ \Upsilon_{i}(k)-\upsilon(k+\tau_{0,i}), \label{equ:prediction_error_u}\\
	 \tilde{\Xi}_{i,j}(k) =&\ \Xi_{i,j}(k)-\xi_i(k+\tau_{i,j}), \label{equ:prediction_error_xi}
\end{align}
and the distributed prediction errors as  
\begin{align}
	\tilde{\Upsilon}_{i,s}(k) =&\ \Upsilon_{i,s}(k)-\upsilon(k  + s +\tau_{0,i}), \label{equ:distribtued_prediction_error_u}  \\
	 \tilde{\Xi}_{i,j,s}(k) =&\ \Xi_{i,j,s}(k)-\xi_i(k  + s +\tau_{i,j}). \label{equ:distributed_prediction_error_xi}
\end{align}
\begin{lemma} \label{lemma1}
	Let Assumption \ref{assum1} hold. Let $T_0=0$ and $T_i=\max_{\pi\in\Pi_{0,i}}\{w(\pi)\}$, $i=1,2,\ldots,N$. The predictors \eqref{equ:predictor_upsilon_heter}, \eqref{equ:predictor_xi_heter} and distributed  predictors \eqref{equ:distributed_predictor_upsilon_s_heter}, \eqref{equ:distributed_predictor_xi_s_heter} satisfy:\\
	\textbf{\rm R}$\boldsymbol{1:}$ $\Upsilon_i(k+s) = \upsilon(k + s + \tau_{0,i}) = \Upsilon_{i,s}(k)$, for $ s \in \mathbb{P}_{0,i}$, $k\ge T_0$;\\
	\textbf{\rm R}$\boldsymbol{2:}$  $\Xi_{i,r}(k + s) = \xi_i(k + s + \tau_{i,r}) = \Xi_{i,r,s}(k)$, for $i=1,2,\ldots,N - 1$, $r=i+1,\ldots,N$ with $a_{r,i}>0$, $s \in \mathbb{P}_{i,r}$, $k\ge T_i$. 
\end{lemma}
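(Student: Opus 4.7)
The plan is to prove (i) by direct computation using the exosystem's closed form, and then prove (ii) by a nested induction: the outer induction runs over the layer index $i$ (exploiting the lower-triangular/DAG structure from Assumption 1), and, within each layer, an inner induction on the shift $s$ handles the distributed predictor claim. Throughout, part (i) is used as a black-box substitution in part (ii), and the prediction horizon $\mathbb{P}_{i,r}$ defined via the longest path guarantees that every delayed signal invoked in the predictor recursions is itself a quantity whose exactness has already been established.

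For (i), recall $\Upsilon_i(k) = S^{\tau_{0,i}}\upsilon(k)$ and $\upsilon(k{+}m) = S^m\upsilon(k)$. Shifting gives $\Upsilon_i(k+s) = S^{\tau_{0,i}}\upsilon(k+s) = S^{s+\tau_{0,i}}\upsilon(k) = \upsilon(k+s+\tau_{0,i})$. The distributed predictor relation $\Upsilon_{i,s}(k) = S^s\Upsilon_i(k) = S^{s+\tau_{0,i}}\upsilon(k)$ then yields the same expression, closing (i) for all $s \in \mathbb{P}_{0,i}$ and $k\ge 0$.

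For (ii), I split the claim into (a) $\Xi_{i,r}(k) = \xi_i(k+\tau_{i,r})$ and (b) $\Xi_{i,r,s}(k) = \Xi_{i,r}(k+s)$, so that the chain $\Xi_{i,r,s}(k) = \Xi_{i,r}(k+s) = \xi_i(k+s+\tau_{i,r})$ yields the desired identity. For (a), I start from the one-step dynamics \eqref{equ:distributed_ob_heter} and iterate $\tau_{i,r}$ times, producing exactly the expression \eqref{equ:initial_predictor}. Substituting part (i) into the $\Upsilon_{i,j-1}$ terms of \eqref{equ:predictor_xi_heter} turns them into $\Upsilon_i(k-\tau_{0,i}+j-1)$; using the outer inductive hypothesis $\Xi_{j,i,l-1}(k-\tau_{j,i}) = \Xi_{j,i}(k-\tau_{j,i}+l-1) = \xi_j(k-\tau_{j,i}+l-1+\tau_{j,i})$ (valid because $j<i$ lies in an upper layer) turns the $\Xi_{j,i,l-1}$ terms into $\Xi_{j,i}(k-\tau_{j,i}+l-1)$. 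After the index change $m = \tau_{i,r}-j+1$ (resp. $m=\tau_{i,r}-l+1$), the two double sums reassemble precisely into the iterated dynamics \eqref{equ:initial_predictor}, yielding $\Xi_{i,r}(k)=\xi_i(k+\tau_{i,r})$. The base case $i=1$ is the same argument with the $\Xi_{j,i}$ sums empty. Shifting $k\mapsto k+s$ immediately gives $\Xi_{i,r}(k+s)=\xi_i(k+s+\tau_{i,r})$.

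For (b), I run an inner induction on $s$. The case $s=0$ holds by the convention $\Xi_{i,r,0}(k)=\Xi_{i,r}(k)$ together with (a). For the step, apply the recursion \eqref{equ:distributed_predictor_xi_s_heter}; using part (i) to rewrite $\Upsilon_{i,\tau_{i,r}+s-1}(k-\tau_{0,i}) = \Upsilon_i(k+\tau_{i,r}+s-1-\tau_{0,i})$, and the outer hypothesis to rewrite $\Xi_{j,i,\tau_{i,r}+s-1}(k-\tau_{j,i}) = \Xi_{j,i}(k+\tau_{i,r}+s-1-\tau_{j,i})$, reduces \eqref{equ:distributed_predictor_xi_s_heter} to one step of the distributed-observer dynamics \eqref{equ:distributed_ob_heter} applied at time $k+\tau_{i,r}+s-1$, i.e.\ $\Xi_{i,r,s}(k)=\hat S_i\xi_i(k+\tau_{i,r}+s-1)+\cdots = \xi_i(k+\tau_{i,r}+s)$, which by (a) equals $\Xi_{i,r}(k+s)$.

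The main obstacle will be the bookkeeping of admissibility: one has to certify that every argument $\tau_{i,r}+s-1$ appearing inside $\Upsilon_{i,\cdot}$ and $\Xi_{j,i,\cdot}$ lies in the corresponding horizons $\mathbb{P}_{0,i}$ and $\mathbb{P}_{j,i}$, which is precisely what the longest-path construction $w_{j,N}=\max_{\pi\in\Pi_{j,N}}\{w(\pi)-\mathbb{L}(\pi)\}$ is engineered to guarantee; and that the time hypothesis $k\ge\max_{\pi\in\Pi_{i,j}}\{w(\pi)\}$ rules out the initialization regime \eqref{equ:without_information}, so all the delayed signals invoked in the recursions carry the exact predictor values rather than the zero extensions. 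Verifying these two admissibility conditions for every index that appears after the changes of variable (and using Assumption 1 to rule out cyclic dependence between distributed predictors at the same layer) is where the proof carries its real weight; the algebraic matching above is then essentially bookkeeping.
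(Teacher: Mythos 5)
Your proposal is correct and follows essentially the same route as the paper: part (i) by direct computation with the exosystem's closed form, and part (ii) by induction over the layers of the DAG, substituting (i) and the upper-layer inductive hypothesis into \eqref{equ:predictor_xi_heter} and \eqref{equ:distributed_predictor_xi_s_heter} so that, after the index change $m=\tau_{i,r}-l+1$, they collapse onto the iterated observer dynamics \eqref{equ:initial_predictor}. Your organization is somewhat cleaner: by splitting (ii) into $\Xi_{i,r}(k)=\xi_i(k+\tau_{i,r})$ and $\Xi_{i,r,s}(k)=\Xi_{i,r}(k+s)$ with an inner induction on $s$, you separate the algebra from the timing analysis. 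The paper instead works regime by regime in $k$ (before any neighbor's data arrives, after the exosystem's arrives, after Agent $1$'s, and so on), explicitly writing down the transient errors $\tilde{\Xi}_{i,r}$ and $\tilde{\Xi}_{i,r,s}$ in each window; that extra labor is not wasted, since those error formulas are reused in the proof of Lemma~\ref{lemma2}(i) to bound $\tilde{\xi}$ on $[0,T_{\max}]$, whereas your argument would leave that to be redone there. The two admissibility checks you defer do close as you describe, but they are where the paper spends most of its length, so you should carry them out: (a) horizon membership requires $\tau_{i,r}+s-1\le w_{i,N}$ for all $s\le w_{r,N}$, which follows by concatenating the edge $(i,r)$ with the extremal path from $r$ to $N$ inside the definition $w_{i,N}=\max_{\pi\in\Pi_{i,N}}\{w(\pi)-\mathbb{L}(\pi)\}$; and (b) the time threshold is obtained by unrolling ``Agent $i$'s predictors are exact once $k-\tau_{j,i}$ exceeds the exactness threshold of every ancestor $j$,'' whose solution is exactly the longest-path bound in the statement and is what rules out the initialization regime \eqref{equ:without_information}. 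The gap is thus one of execution rather than of idea.
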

\begin{proof}
   \textcolor{black}{ First, we prove result \textbf{\rm R}$\boldsymbol{1.}$   Since the exosystem is autonomous, one has
    \begin{align}
	   \upsilon(k+s+\tau_{0,i}) = S^{s+\tau_{0,i}}\upsilon(k), \quad k\ge T_0.
    \end{align}
    Then, by \eqref{equ:predictor_upsilon_heter} and \eqref{equ:distributed_predictor_upsilon_s_heter},
    \begin{align}
    	\Upsilon_{i,s}(k) = S^s\Upsilon_i(k) = S^{s+\tau_{0,i}}\upsilon(k) = \upsilon(k+s+\tau_{0,i}).
    \end{align}
    Moreover,
    \begin{align}
    	\Upsilon_i(k+s) = S^{\tau_{0,i}}\upsilon(k+s) = \upsilon(k+s+\tau_{0,i}).
    \end{align}
    Hence,
    \begin{align}
    	\Upsilon_i(k+s) = \upsilon(k+s+\tau_{0,i}) = \Upsilon_{i,s}(k),
    \end{align}
    for all $s\in\mathbb P_{0,i}$ and $k\ge T_0$. This proves \textbf{\rm R}$\boldsymbol{1.}$}
    
    \textcolor{black}{Next, we prove result \textbf{\rm R}$\boldsymbol{2}$ by induction according to the topological ordering of the DAG. For any positive integer $h$, iterating
    \eqref{equ:distributed_ob_heter} gives
    \begin{align}
    	\xi_i(k+h)	=&\ \textstyle \hat S_i^h\xi_i(k) +\beta\sum_{\ell=1}^{h}\hat S_i^{h-\ell}a_{i,0}S
    	\Upsilon_i(k+\ell-1-\tau_{0,i}) \notag\\
    	& \textstyle +\beta\sum_{j=1}^{i-1}\sum_{\ell=1}^{h}\hat S_i^{h-\ell}a_{i,j}S\Xi_{j,i}(k+\ell-1-\tau_{j,i}).
    \end{align}
    For any $s\in\mathbb P_{i,r}$, starting from $\Xi_{i,r,0}(k)=\Xi_{i,r}(k)$ and iterating \eqref{equ:distributed_predictor_xi_s_heter} from $1$ to $s$, we obtain
    \begin{align}
    	\Xi_{i,r,s}(k) =&\ \textstyle \hat S_i^{\tau_{i,r}+s}\xi_i(k) + \beta\sum_{\ell=1}^{\tau_{i,r}+s} \hat S_i^{\tau_{i,r}+s-\ell}a_{i,0}S\Upsilon_{i,\ell-1}(k-\tau_{0,i}) \notag\\
    	& \textstyle +\beta\sum_{j=1}^{i-1}\sum_{\ell=1}^{\tau_{i,r}+s}\hat S_i^{\tau_{i,r}+s-\ell}a_{i,j}S\Xi_{j,i,\ell-1}(k-\tau_{j,i}).
    \end{align}
    In particular, for $s=0$, this formula reduces to \eqref{equ:predictor_xi_heter}.
    Taking $h=\tau_{i,r}+s$ in the above expression of $\xi_i(k+h)$ and subtracting it from the above expression of $\Xi_{i,r,s}(k)$ yield
    \begin{align}
    	&\ \tilde{\Xi}_{i,r,s}(k) \notag\\
    	=&\ \textstyle \beta\sum_{\ell=1}^{\tau_{i,r}+s}\hat S_i^{\tau_{i,r}+s-\ell}a_{i,0}S[\tilde{\Upsilon}_{i,\ell-1}(k-\tau_{0,i}) - \tilde{\Upsilon}_{i}(k+\ell-1-\tau_{0,i})] \notag\\
    	& \textstyle +\beta\sum_{j=1}^{i-1}\sum_{\ell=1}^{\tau_{i,r}+s}\hat S_i^{\tau_{i,r}+s-\ell}a_{i,j}S \notag\\
    	& \textstyle \times[\tilde{\Xi}_{j,i,\ell-1}(k-\tau_{j,i}) - \tilde{\Xi}_{j,i}(k+\ell-1-\tau_{j,i})]. \label{equ:Xi_error_relation_lemma1}
    \end{align}}
    \textcolor{black}{\indent We now show that the right-hand side of the above equation vanishes for $k\ge T_i$. Firstly, for any active edge $(j,i)$ with $j<i$, every path from node $0$ to node $j$ followed by the edge $(j,i)$ is a path from node $0$ to node $i$. Thus, it follows that $T_i\ge T_j+\tau_{j,i}$. Furthermore, for $k\ge T_i$, we have $k-\tau_{j,i}\ge T_j$. In addition, if $a_{i,0}>0$, then $T_i\ge \tau_{0,i}$, and hence one has $k-\tau_{0,i}\ge T_0$.}
    
    \textcolor{black}{ Subsequently, we prove the claim by induction according to the topological ordering.
    For the base case, consider \textbf{Agent} $\boldsymbol{1.}$ Since there is no agent indexed before \textbf{Agent} $\boldsymbol{1,}$ the summation over $j=1,\ldots,i-1$ is empty. Moreover, for $k\ge T_1$, one has $k-\tau_{0,1}\ge T_0$. Hence, by \textbf{R}$\boldsymbol{1,}$ all exosystem-related error terms in Eq. \eqref{equ:Xi_error_relation_lemma1} vanish. Therefore, 
    \begin{align}
    	\tilde{\Xi}_{1,r,s}(k)=0, \qquad k\ge T_1,\quad s\in\mathbb P_{1,r}.
    \end{align}
    Now assume that the claim holds for all agents $\{1,2,\ldots,i-1\}$. We prove that it holds for \textbf{Agent} $\boldsymbol{i}.$ For $k\ge T_i$, it follows from $T_i\ge T_j+\tau_{j,i}$ that
    \begin{align}
    	k-\tau_{j,i}\ge T_j
    \end{align}
    for every active upper-layer neighbor $j$ of \textbf{Agent} $\boldsymbol{i}$ 
    Since $\ell\ge1$, one also has
    \begin{align}
    	k+\ell-1-\tau_{j,i}\ge T_j.
    \end{align}
    To apply the induction hypothesis to $\tilde{\Xi}_{j,i,\ell-1}(k-\tau_{j,i})$, we need to verify that the prediction index $\ell-1$ lies within the extended prescribed horizon $\mathbb P_{j,i}$. Indeed, $\ell-1$ is admissible because, for $s\in\mathbb P_{i,r}$,
    \begin{align}
    	\ell-1 \le \tau_{i,r}+s-1 \le \tau_{i,r}+w_{r,N}-1 \le w_{i,N}.
    \end{align}
    Thus, by the induction hypothesis and by the fact that
    $\tilde{\Xi}_{j,i,0}(t)=\tilde{\Xi}_{j,i}(t)$, we have
    \begin{align}
    	\tilde{\Xi}_{j,i,\ell-1}(k-\tau_{j,i})=0,\quad \tilde{\Xi}_{j,i}(k+\ell-1-\tau_{j,i})=0.
    \end{align}
    Similarly, by result \textbf{\rm R}$\boldsymbol{1,}$ the exosystem error terms satisfy
    \begin{align}
    	\tilde{\Upsilon}_{i,\ell-1}(k-\tau_{0,i})=0, \quad \tilde{\Upsilon}_{i}(k+\ell-1-\tau_{0,i})=0,
    \end{align}
    whenever $a_{i,0}>0$; if $a_{i,0}=0$, the corresponding terms vanish trivially. Substituting the above identities into \eqref{equ:Xi_error_relation_lemma1} gives
    \begin{align}
    	\tilde{\Xi}_{i,r,s}(k)=0, \quad k\ge T_i,\quad s\in\mathbb P_{i,r}.
    \end{align}
    Thus, by topological induction, the above identity holds for all $i=1,2,\ldots,N-1$.}
    
    \textcolor{black}{ By the definition of $\tilde{\Xi}_{i,r,s}(k)$, this further implies
    \begin{align}
    	\Xi_{i,r,s}(k) = \xi_i(k+s+\tau_{i,r}), \quad k\ge T_i,\quad s\in\mathbb P_{i,r}.
    \end{align}
    Taking $s=0$ gives
    \begin{align}
    	\Xi_{i,r}(k) = \xi_i(k+\tau_{i,r}), \quad k\ge T_i.
    \end{align}
    Since $k\ge T_i$ implies $k+s\ge T_i$, applying the last identity at time
    $k+s$ yields
    \begin{align}
    	\Xi_{i,r}(k+s) = \xi_i(k+s+\tau_{i,r}).
    \end{align}
    Consequently,
    \begin{align}
    	\Xi_{i,r}(k+s) = \xi_i(k+s+\tau_{i,r}) = \Xi_{i,r,s}(k),
    \end{align}
    for all $i=1,2,\ldots,N-1$, $r=i+1,\ldots,N$ with $a_{r,i}>0$, $s\in\mathbb P_{i,r}$, and $k\ge T_i$. This proves result \textbf{R}$\boldsymbol{2}$ and completes the proof.}
\end{proof}
\begin{remark}
    It is worth noting why the condition $k\ge T_i$ is used. The state $\xi_i(k+h)$ is obtained by iterating the distributed observer from time $k$ to time $k+h-1$. At the $\ell$-th step, i.e., at time $k+\ell-1$, the distributed observer uses the delayed information
    \begin{align}
    	\Upsilon_i(k+\ell-1-\tau_{0,i}),\quad \Xi_{j,i}(k+\ell-1-\tau_{j,i}).
    \end{align}
    These signals are indeed available to \textbf{Agent} $\boldsymbol{i}$  at the time $k+\ell-1$. However, the predictor $\Xi_{i,r}(k)$ and the distributed predictor $\Xi_{i,r,s}(k)$ are computed at the current time $k$ to predict the corresponding future state. Hence, \textbf{Agent} $\boldsymbol{i}$ cannot directly use the future-arriving signals $\Upsilon_i(k+\ell-1-\tau_{0,i})$ and $\Xi_{j,i}(k+\ell-1-\tau_{j,i})$ when constructing the prediction at time
    $k$. Instead, it replaces them by the information already available at time $k$, namely
    \begin{align}
    	\Upsilon_{i,\ell-1}(k-\tau_{0,i}),\quad	\Xi_{j,i,\ell-1}(k-\tau_{j,i}).
    \end{align}
    Therefore, the prediction is exact only if the following replacement relations hold:
    \begin{align}
    	\Upsilon_{i,\ell-1}(k-\tau_{0,i}) =	\Upsilon_i(k+\ell-1-\tau_{0,i}), \label{equ:replacement_upsilon}
    \end{align}
    and
    \begin{align}
    	\Xi_{j,i,\ell-1}(k-\tau_{j,i}) = \Xi_{j,i}(k+\ell-1-\tau_{j,i}). \label{equ:replacement_xi}
    \end{align}
    
    For the exosystem-related term, \eqref{equ:replacement_upsilon} is guaranteed by \textbf{\rm R}$\boldsymbol{1}$ in Lemma \ref{lemma1} whenever
    $k-\tau_{0,i}\ge T_0=0$.
    For the upper-layer term, \eqref{equ:replacement_xi} is guaranteed by the corresponding result for \textbf{Agent} $ \boldsymbol{j}$ whenever $k-\tau_{j,i}\ge T_j$. Thus, in order to guarantee that all replacement relations used by \textbf{Agent} $ \boldsymbol{i}$ are exact, it is sufficient to require
    $k\ge T_j+\tau_{j,i}$ for every active upper-layer neighbor $j$ of \textbf{Agent} $ \boldsymbol{i},$ and also $k\ge\tau_{0,i}$ whenever $a_{i,0}>0$. By the definition of $T_i$, these conditions are simultaneously satisfied when $k\ge T_i$.
    
    If $k<T_i$, then there exists a path $\pi^\star$ from node $0$ to node $i$ such that $w(\pi^\star)=T_i$. Let $(j^\star,i)$ be the last edge of this path. Then $T_i=T_{j^\star}+\tau_{j^\star,i}$, and hence $k-\tau_{j^\star,i}<T_{j^\star}$. Therefore, the replacement relation associated with this critical predecessor, $\Xi_{j^\star,i,\ell-1}(k-\tau_{j^\star,i}) = \Xi_{j^\star,i}(k+\ell-1-\tau_{j^\star,i})$ or, when $j^\star=0$, $\Upsilon_{i,\ell-1}(k-\tau_{0,i}) = \Upsilon_i(k+\ell-1-\tau_{0,i})$
    is not guaranteed to hold. Consequently, the corresponding bracketed term in \eqref{equ:Xi_error_relation_lemma1} may be nonzero. Thus, exactness before $T_i$ cannot be guaranteed in general.
\end{remark}
\begin{remark}[Load after exact compensation] 
    \textcolor{black}{ Lemma~\ref{lemma1} shows that the proposed predictors achieve exact delay compensation after a finite time step. During this transient interval, the full prediction windows in Table~\ref{tab:communication_load} are transmitted. After exact compensation has been achieved, the prediction windows can be updated in a shift-register manner. In this implementation, previously transmitted prediction values are reused by shifting the prediction window, and only the newly generated terminal prediction needs to be transmitted over each communication edge. Therefore, the maximum communication load is reduced to $q|\mathcal E|$. Thus, the communication load remains at its maximum value during the finite full-window transmission phase and then drops to the small value once exact delay compensation has been achieved.}
\end{remark}
\subsection{Consensus of the distributed observers}
\textcolor{black}{In the following, we illustrate the distributed observers can achieve consensus, which is essential for establishing output synchronization of the heterogeneous MASs.}
Now, let $\xi(k) = [\xi_1^\mathrm{T}(k),\xi_2^\mathrm{T}(k),\ldots,\xi_N^\mathrm{T}(k)]^\mathrm{T}\in\mathbb{R}^{Nq}$, and $\tilde{\xi}(k) = \xi(k)-\mathbf{1}_N\otimes \upsilon(k) = [\tilde{\xi}_1^\mathrm{T}(k),\tilde{\xi}_2^\mathrm{T}(k),\ldots,\tilde{\xi}_N^\mathrm{T}(k)]^\mathrm{T}\in\mathbb{R}^{Nq}$ with $\tilde{\xi}_i(k) = \xi_i(k) - \upsilon(k)\in\mathbb{R}^{q} $.  From \eqref{equ:distributed_ob_heter}, the following error system is obtained
\begin{align}
 	\tilde{\xi}(k+1)=&\ \bar{S} \tilde{\xi}(k) 
    + \beta (\mathcal{D}_0\otimes S) \psi_0(k) \notag\\ 
    & \textstyle  + \beta \sum_{i=1}^{N-1} (\mathfrak{a}_i\otimes S)\psi_i(k), \label{equ:tilde_xi_heter}
\end{align}
where $\otimes$ denotes the Kronecker product,
\begin{align*}
&\ \bar{S} =   \mathbf{I}_N\otimes S - \beta(\mathcal{H}+\mathcal{D}_0)\otimes S
= \big(\mathbf{I}_N-\beta(\mathcal{H}+\mathcal{D}_0)\big)\otimes S.
\end{align*}
Moreover, $\psi_0(k) \in\mathbb{R}^{Nq}$, $\psi_i(k)\in\mathbb{R}^{Nq}$ and $\mathfrak{a}_i \in\mathbb{R}^{N\times N}$, $i=1,2,\ldots,N-1$, satisfy
\begin{align*}
	\psi_0(k) =&\ \begin{bsmallmatrix}
		\tilde{\Upsilon}_1^{\mathrm{T}}(k-\tau_{0,1}) & \tilde{\Upsilon}_2^{\mathrm{T}}(k-\tau_{0,2}) & \cdots & \tilde{\Upsilon}_N^{\mathrm{T}}(k-\tau_{0,N})
	\end{bsmallmatrix}^{\mathrm{T}},\\
	\psi_i(k) =&\ \begin{bsmallmatrix}
		 \mathbf{0}_{iq}^{\mathrm{T}} & \tilde{\Xi}_{i,i+1}^{\mathrm{T}}(k-\tau_{i,i+1}) & \tilde{\Xi}_{i,i+2}^{\mathrm{T}}(k-\tau_{i,i+2})& \cdots & \tilde{\Xi}_{i,N}^{\mathrm{T}}(k-\tau_{i,N})
	\end{bsmallmatrix}^{\mathrm{T}},\\
	\mathfrak{a}_i =&\ \mathrm{diag}[\mathbf{0}_{i}^{\mathrm{T}}, a_{i+1,i}, a_{i+2,i}, \ldots, a_{N,i}].
\end{align*}

\begin{lemma} \label{lemma2}
	Let Assumption \ref{assum1} hold. If the coupling gain $\beta$ satisfies 
	\begin{align}
		\rho(S)\rho(\mathbf{I}_{N}-\beta(\mathcal{H}+\mathcal{D}_0))< 1, \label{ineq: beta1}
	\end{align}
	then, for all initial condition $ \upsilon(0)\in\mathbb{R}^q$, $\xi_i(0)\in\mathbb{R}^q$, $i=1,2,\ldots,N$, \\
	(i) $\tilde{\xi}(k)$ is bounded for $k\in[0,T_{N}]$;\\
	(ii) The distributed observers \eqref{equ:distributed_ob_heter} achieve consensus, $\lim_{k\to\infty}|\xi(k) - \mathbf{1}_N\otimes\upsilon(k)|=  0 $. Furthermore, $\lim_{k\to\infty}|\xi_{i}(k)-\xi_j(k)| = 0$, for any $i,j\in\{1,2,\ldots,N\}$.
\end{lemma}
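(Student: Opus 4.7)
The key structural observation is that the dynamics matrix factors as a Kronecker product,
\[
\bar{S} = (\mathbf{I}_N-\beta(\mathcal{H}+\mathcal{D}_0))\otimes S,
\]
so its spectral radius satisfies $\rho(\bar S)=\rho(\mathbf{I}_N-\beta(\mathcal{H}+\mathcal{D}_0))\cdot\rho(S)<1$ by hypothesis \eqref{ineq: beta1}. Consequently there exist $c\ge 1$ and $\alpha\in(0,1)$ with $\|\bar S^{k}\|\le c\,\alpha^{k}$ for all $k\in\mathbb{N}$. This will drive both statements.

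For part (i), I would proceed by a straightforward induction along the recursion \eqref{equ:tilde_xi_heter} on the finite window $k\in[0,T_{\max}]$. Since $\upsilon(k)=S^{k}\upsilon(0)$ and each $\xi_i(k)$ is generated by a finite-dimensional linear recursion driven by $\upsilon$ and previously-produced predictor signals, every component entering $\psi_0(k)$ and $\psi_i(k)$ is a finite linear combination of values of $\upsilon$ and $\xi_i$ on $[0,T_{\max}]$; these are bounded because the window is finite. Plugging the resulting uniform bound $\|\psi_0(k)\|+\sum_{i}\|\psi_i(k)\|\le M$ into \eqref{equ:tilde_xi_heter} and unrolling gives $\|\tilde{\xi}(k)\|\le \|\bar S\|^{k}\|\tilde\xi(0)\|+\beta\bar M\sum_{j=0}^{k-1}\|\bar S\|^{j}$ for $k\le T_{\max}$, which is finite.

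For part (ii), I would split the horizon at $T_{\max}$. By Lemma~\ref{lemma1}, for every $i$ one has $\tilde{\Upsilon}_i(k-\tau_{0,i})=0$ and $\tilde{\Xi}_{i,r}(k-\tau_{i,r})=0$ whenever $k\ge T_{\max}$, so $\psi_0(k)=0$ and $\psi_i(k)=0$ in this regime. Thus \eqref{equ:tilde_xi_heter} reduces to the autonomous recursion $\tilde\xi(k+1)=\bar S\,\tilde\xi(k)$ for $k\ge T_{\max}$, and the boundedness established in (i) provides a finite initial condition $\tilde\xi(T_{\max})$. Using $\rho(\bar S)<1$ we obtain
\[
\|\tilde\xi(k)\|\le c\,\alpha^{k-T_{\max}}\|\tilde\xi(T_{\max})\|\to 0,\qquad k\to\infty,
\]
so each $\tilde\xi_i(k)=\xi_i(k)-\upsilon(k)\to 0$. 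Consensus then follows by the triangle inequality, since $|\xi_i(k)-\xi_j(k)|\le |\tilde\xi_i(k)|+|\tilde\xi_j(k)|\to 0$ for every pair $i,j$.

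The only delicate point, and the step I would double-check most carefully, is the bookkeeping needed to ensure that $\psi_0$ and $\psi_i$ actually vanish for \emph{all} $k\ge T_{\max}$ with the shifts built into their definitions (the arguments $k-\tau_{0,i}$ and $k-\tau_{i,r}$): the choice $T_{\max}=\sum_{j=1}^{N}\max_{l<j}\tau_{l,j}$ is precisely what Lemma~\ref{lemma1} requires for the prediction errors at every agent to have decayed to zero, and this matching should be verified layer by layer. Everything else is a direct consequence of the Kronecker-product spectral identity and the discrete Gr\"onwall-type unrolling on the finite interval.
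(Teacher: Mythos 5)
Your proposal is correct and follows essentially the same route as the paper: condition \eqref{ineq: beta1} makes $\bar S=(\mathbf{I}_N-\beta(\mathcal{H}+\mathcal{D}_0))\otimes S$ Schur, part (i) follows by bounding the driving terms $\psi_0,\psi_i$ over the finite window and unrolling \eqref{equ:tilde_xi_heter} (the paper does this layer by layer with Young/Cauchy--Schwarz, which is just a more explicit version of your finite-window argument), and part (ii) follows because Lemma~\ref{lemma1} kills the forcing for $k\ge T_{\max}$, leaving the autonomous Schur recursion. The only cosmetic difference is that you justify $\rho(\bar S)<1$ via the Kronecker spectral identity explicitly, and you correctly flag the shift bookkeeping ($k-\tau_{i,r}\ge\sum_{j\le i}\max_l\tau_{l,j}$ once $k\ge T_{\max}$), which indeed checks out.
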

\begin{proof}
	\textcolor{black}{First, we show that the compensation terms in \eqref{equ:tilde_xi_heter} vanish for all $k\ge T_N$. Consider any active edge $(i,r)$ with $a_{r,i}>0$, where $i<r$. Since $a_{\ell,\ell-1}>0$ for $\ell=2,3,\ldots,N$, there exists a directed path from node $r$ to node $N$; when $r=N$, this path is understood as the trivial path. By the definition of $T_i$, there exists a path from node $0$ to node $i$ whose weight is $T_i$. Concatenating this path with the edge $(i,r)$ and then with a path from node $r$ to node $N$ gives a path from node $0$ to node $N$. Hence, $T_N\ge T_i+\tau_{i,r}.$
    Therefore, for every $k\ge T_N$, we have $k-\tau_{i,r}\ge T_i$. By Lemma \ref{lemma1}, it follows that
    \begin{align}
        \tilde{\Xi}_{i,r}(k-\tau_{i,r})=0,
    \end{align}
    for every active edge $(i,r)$ with $a_{r,i}>0$.}
    
    \textcolor{black}{Similarly, if $a_{i,0}>0$, then the edge $(0,i)$ followed by a path from node $i$ to node $N$ forms a path from node $0$ to node $N$. Thus, $T_N\ge \tau_{0,i}$. Consequently, for every $k\ge T_N$,
    we have $k-\tau_{0,i}\ge T_0$. Again by Lemma \ref{lemma1}, one has
    \begin{align}
    	\tilde{\Upsilon}_{i}(k-\tau_{0,i})=0.
    \end{align}
    If $a_{i,0}=0$, the corresponding term is multiplied by the zero entry of $\mathcal D_0$. Hence, for all $i=1,2,\ldots,N-1$ and all $k\ge T_N$, $(\mathcal D_0\otimes S)\psi_0(k)=0$ and $(\mathfrak a_i\otimes S)\psi_i(k)=0$. Therefore, \eqref{equ:tilde_xi_heter} reduces to
    \begin{align}
    	\tilde{\xi}(k+1)=\bar S\tilde{\xi}(k),\quad k\ge T_N. \label{equ:tilde_xi_after_TN}
    \end{align}}
    \textcolor{black}{\indent Next, we show the boundedness of $\tilde{\xi}(k)$ on the transient interval $[0,T_N]$. Let $R_0=|\upsilon(0)|+|\xi(0)|$.
    Since the interval $[0,T_N]$ contains only finitely many time instants, we can repeatedly apply the linear recursions \eqref{equ:distributed_ob_heter} and \eqref{equ:predictor_upsilon_heter}--\eqref{equ:distributed_predictor_xi_s_heter}. Together with the zero-prehistory convention for negative time indices, each variable involved in these recursions over $[0,T_N]$ is a finite linear combination of $\upsilon(0)$ and $\xi_i(0)$, $i=1,2,\ldots,N$. Therefore, there exists a constant $C_T>0$ such that, for all admissible indices and all $0\le k\le T_N$, $|\upsilon(k)|+|\xi(k)|+|\Upsilon_i(k)|+|\Upsilon_{i,s}(k)| +|\Xi_{i,r}(k)|+|\Xi_{i,r,s}(k)| \le C_T(|\upsilon(0)|+|\xi(0)|)$.
    Indeed, the claim holds at $k=0$ because all initial values are finite and all negative-time predictor values are set to zero. If it holds up to time $k$, then \eqref{equ:distributed_ob_heter} and \eqref{equ:predictor_upsilon_heter}--\eqref{equ:distributed_predictor_xi_s_heter} imply that all variables at time $k+1$ are finite linear combinations of previously bounded variables. Since the number of steps is finite, the same type of bound holds on the whole interval $[0,T_N]$.}

    \textcolor{black}{Consequently, the prediction-error terms contained in $\psi_0(k)$ and $\psi_i(k)$ are bounded on $[0,T_N]$. Thus, there exists a constant $C_\psi>0$, for $0\le k<T_N$, such that
    $|\beta \sum_{i=1}^{N-1}(\mathfrak a_i\otimes S)\psi_i(k) + \beta(\mathcal D_0\otimes S)\psi_0(k)| \le C_\psi R_0$.
    From \eqref{equ:tilde_xi_heter}, for $0\le k\le T_N$, one has
    \begin{align}
    	\textstyle \tilde{\xi}(k) =&\ \textstyle \bar S^k\tilde{\xi}(0) + \sum_{\ell=0}^{k-1} \bar S^{k-1-\ell}
    	[ \beta \sum_{i=1}^{N-1}(\mathfrak a_i\otimes S)\psi_i(\ell)\notag\\
        & + \beta(\mathcal D_0\otimes S)\psi_0(\ell) ].
    \end{align}
    Since $k\le T_N$ and $T_N$ is finite, the matrices $\bar S^k$ and the finite sum above are bounded on this interval. Therefore, there exists a constant $M_T>0$ such that 
    \begin{align}
        |\tilde{\xi}(k)| \le M_T\big(|\upsilon(0)|+|\xi(0)|\big), \quad 0\le k\le T_N.
    \end{align}
    This proves (i).}
    
    \textcolor{black}{It remains to prove convergence. By the definition of $\bar S$,
    $\bar S = (\mathbf I_N-\beta(\mathcal H+\mathcal D_0))\otimes S$, and using the spectral-radius property of the Kronecker product gives
    \begin{align}
    	\rho(\bar S) = \rho\big(\mathbf I_N-\beta(\mathcal H+\mathcal D_0)\big)\rho(S).
    \end{align}
    Thus, condition \eqref{ineq: beta1} implies that $\bar S$ is Schur. It follows from \eqref{equ:tilde_xi_after_TN} that
    \begin{align}
    	\textstyle \lim_{k\to\infty}\tilde{\xi}(k)=0.
    \end{align}
    Since $\tilde{\xi}(k)=\xi(k)-\mathbf 1_N\otimes\upsilon(k)$, we have
    \begin{align}
    	\textstyle \lim_{k\to\infty}|\xi(k)-\mathbf 1_N\otimes\upsilon(k)| = 0.
    \end{align}
    Furthermore, for any $i,j\in\{1,2,\ldots,N\}$,
    \begin{align}
    	\textstyle |\xi_i(k)-\xi_j(k)| \le |\xi_i(k)-\upsilon(k)| + |\xi_j(k)-\upsilon(k)|.
    \end{align}
    Taking the limit on both sides yields
    \begin{align}
    	\textstyle \lim_{k\to\infty}|\xi_i(k)-\xi_j(k)|=0.
    \end{align}
    This proves (ii) and completes the proof of Lemma \ref{lemma2}.}
\end{proof}
\section{Distributed control design}\label{distributed-control}
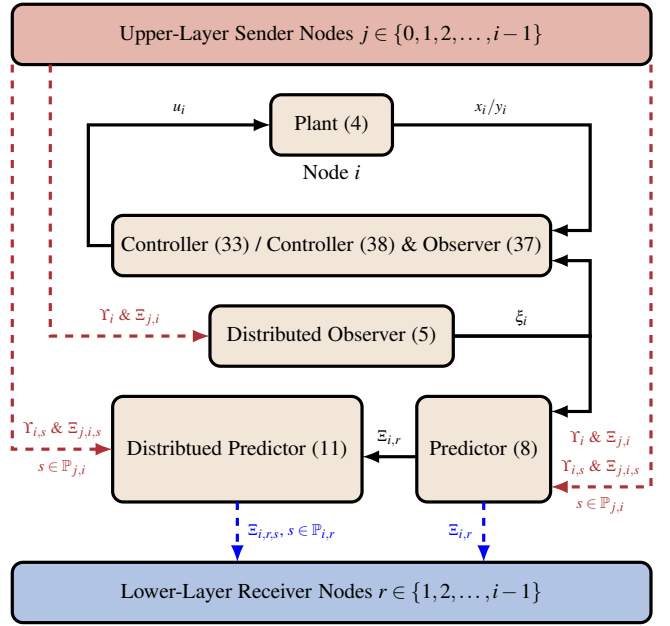
\begin{figure}[!t]
	\centering
	\begin{tikzpicture}[scale=0.8]
		
		\draw[rounded corners, very thick, fill=Maroon!30] (-4.25,1.5) rectangle (6.25,0.5) node[pos=.5] {\scriptsize Upper-Layer Sender Nodes $j \in\{0, 1,2,\ldots,i-1\}$};
		
		\draw[rounded corners, very thick, fill=brown!20] (0,0) rectangle (2,-1) node[pos=.5] {\scriptsize Plant \eqref{equ:plant_continuous}};
		\node[below] at (1, -1) {\scriptsize Node $i$};
		
		\draw[rounded corners, very thick,  fill=brown!20] (-2.6,-2) rectangle (4.6,-3) node[pos=.5] {\scriptsize Controller \eqref{equ:state_feedback_heter} / Controller \eqref{equ:output_feedback} \& Observer \eqref{equ:observer}};
		
		\draw[rounded corners, very thick, fill=brown!20] (-1,-3.5) rectangle (3,-4.5) node[pos=.5] {\scriptsize Distributed Observer \eqref{equ:distributed_ob_heter}};
		
		\draw[rounded corners, very thick, fill=brown!20] (2.4,-5) rectangle (4.6,-6.75) node[pos=.5] {\scriptsize Predictor \eqref{equ:predictor_xi_heter}};
		\draw[rounded corners, very thick, fill=brown!20] (-2.6,-5) rectangle (1.5,-6.75) node[pos=.5] {\scriptsize Distribtued Predictor \eqref{equ:distributed_predictor_xi_s_heter}};
		
		\draw[rounded corners, very thick, fill=NavyBlue!30] (-4.25,-7.75) rectangle (6.25,-8.75) node[pos=.5] {\scriptsize Lower-Layer Receiver Nodes $r\in\{i+1,i+2,\ldots,N\}$};
		
		\draw[->,very thick, >=latex]  (2, -0.5) -- node[midway,above]{\tiny $x_i / y_i$} (5.25,-0.5) -- (5.25,-2.25) ->  (4.6,-2.25) ;  
		\draw[->,very thick, >=latex]  (-2.6, -2.5) --   (-3,-2.5) -- (-3,-0.5) ->  node[midway,above]{\tiny $u_i$} (0,-0.5) ;  
		
		\draw[->,very thick, >=latex]  (3, -4) -- node[midway,above]{\tiny $\xi_i$} (5.25,-4) -- (5.25,-2.75) ->  (4.6,-2.75) ; 
		\draw[->,very thick, >=latex]  (3, -4) --  (5.25,-4) -- (5.25,-5.25) ->  (4.6,-5.25) ; 
		
		\draw[->,dashed, Maroon, very thick, >=latex]  (-3.625, 0.5) -- (-3.625,-4)  ->  (-1,-4) node[midway,  above]{\tiny $\Upsilon_i$ \&  $\Xi_{j,i}$}; 
		
		\draw[->, very thick, >=latex]  (2.4, -6)  -> (1.5, -6) node[midway,above]{\tiny $\Xi_{i,r}$}; 
		
		\draw[->,dashed, Maroon, very thick, >=latex, ]  (-4.25, 0.5) -- (-4.25,-5.875)  ->  (-2.6,-5.875) node[midway,  align=center,  above]{\tiny $\Upsilon_{i,s}$ \& $\Xi_{j,i,s}$}  node[midway,  align=center,  below]{\tiny $s \in \mathbb{P}_{j,i}$}; 
		\draw[->,dashed, Maroon,  very thick, >=latex]   (6.25, 0.5) -- (6.25,-6.5)  ->  (4.6,-6.5) node[midway,  align=center, above]{\tiny $\Upsilon_i$ \& $\Xi_{j,i}$ \\ \tiny  $\Upsilon_{i,s}$  \& $\Xi_{j,i,s}$ }  node[midway,  align=center, below]{\tiny $s \in \mathbb{P}_{j,i}$}  ; 
		
		\draw[->,dashed, blue, very thick, >=latex]  (-0.55, -6.75)  -> (-0.55, -7.75) node[midway, align=center, right]{\tiny $\Xi_{i,r,s}$, $s \in \mathbb{P}_{i,r}$}; 
		
		\draw[->,dashed, blue,  very thick, >=latex]  (3.5, -6.75)  -> (3.5, -7.75) node[midway, left]{\tiny $\Xi_{i,r}$}; 
	\end{tikzpicture}
	\caption{\textcolor{black}{Diagram of the prediction-based distributed control. Solid black lines indicate instantaneous intra-agent information flow. Dashed lines represent wireless transmission subject to communication delays: red lines denote information received from upper-layer nodes (delayed by $\tau_{j,i}$), while blue lines denote information sent to lower-layer nodes (delayed by $\tau_{i,r}$). In particular, to avoid redundancy, duplicate data (such as $\Upsilon_i$, $\Xi_{j,i}$, $\Upsilon_{i,s}$ and $\Xi_{j,i,s}$) shared across different blocks is transmitted to the central processor only once. The processor subsequently distributes the specific information required by each respective block.}}
	\label{fig_control}
\end{figure}
In this section, we provide two distributed control designs to achieve the output synchronization of MASs with communication delays. In Fig. \ref{fig_control}, we exhibit the diagram of the prediction-based distributed control.

\subsection{Distributed state feedback}
In the following, we first present a prediction-based distributed state feedback design to deal with the output synchronization problem of  MASs with distinct communication delays.
Firstly, we provide a technical lemma (borrowed from \cite{huang2016cooperative}) to illustrate the main results, which is specified as the following:
\begin{lemma} \label{lemma3}
	Consider the system $x(k+1) = Ax(k) + f(k)$,
	where $x(k)\in\mathbb{R}^n$, $A\in\mathbb{R}^{n\times n}$ is Schur, $f(k)$ is well defined for all $k\in \mathbb{N}$. If $f(k)\to 0$ as $k\to\infty $, then, for any $x(0)\in\mathbb{R}^n$, $x(k) \to 0$ as $k\to\infty$.
\end{lemma}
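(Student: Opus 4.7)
The plan is to solve the linear recursion explicitly and then analyze the two resulting components separately. Iterating the difference equation once from the initial condition gives
\begin{equation*}
x(k) = A^k x(0) + \sum_{j=0}^{k-1} A^{k-1-j} f(j),
\end{equation*}
so it suffices to show that both the zero-input response $A^k x(0)$ and the convolution term vanish as $k\to\infty$. Since $A$ is Schur, $\rho(A)<1$ and there exist constants $M>0$ and $\lambda\in(0,1)$ such that $\|A^k\|\le M\lambda^k$ for all $k\in\mathbb{N}$ (e.g., by Gelfand's formula or the Jordan form argument). This immediately yields $\|A^k x(0)\|\le M\lambda^k|x(0)|\to 0$.

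The substantive step is the forced response, which I would handle by the standard ``split-the-sum'' argument. Fix an arbitrary $\epsilon>0$. Because $f(k)\to 0$, the sequence is bounded, say $\bar f=\sup_{k\in\mathbb{N}}|f(k)|<\infty$, and there exists $N\in\mathbb{N}$ such that $|f(j)|<\epsilon$ for every $j\ge N$. For $k>N$, split
\begin{equation*}
\sum_{j=0}^{k-1} A^{k-1-j} f(j) = \sum_{j=0}^{N-1} A^{k-1-j} f(j) + \sum_{j=N}^{k-1} A^{k-1-j} f(j).
\end{equation*}
The first block is a finite sum of fixed $f(j)$ multiplied by $A^{k-1-j}$ whose exponent tends to infinity, so its norm is bounded by $M\bar f\sum_{j=0}^{N-1}\lambda^{k-1-j}$, which vanishes as $k\to\infty$. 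The second block is bounded by $\epsilon M\sum_{j=N}^{k-1}\lambda^{k-1-j}\le \epsilon M/(1-\lambda)$ using the geometric sum. Combining the two blocks and the zero-input response yields $\limsup_{k\to\infty}|x(k)|\le \epsilon M/(1-\lambda)$, and since $\epsilon$ was arbitrary, $x(k)\to 0$.

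The main obstacle, though minor, is justifying the uniform exponential bound $\|A^k\|\le M\lambda^k$ from the spectral radius condition; once this bound is in hand, the rest is a direct $\epsilon$–$N$ argument on a geometric convolution. No delay compensation or graph structure is needed here, as the lemma is a purely linear discrete-time input-to-state result that will later be applied with $f(k)$ playing the role of a vanishing mismatch term induced by the predictor/observer coupling.
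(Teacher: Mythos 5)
Your proof is correct: the variation-of-constants formula, the exponential bound $\|A^k\|\le M\lambda^k$ from the Schur property, and the split-the-sum argument on the convolution term together give a complete and standard proof of this input-to-state convergence result. Note that the paper itself does not prove this lemma; it is imported verbatim from the cited reference \cite{huang2016cooperative}, so there is no in-paper argument to compare against, but your proof is exactly the canonical one and fills that gap cleanly.
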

For $i=1,2,\ldots,N$, we design the following distributed state feedback  controller:
\begin{align}
	u_i(k) = K_{x_i}x_i(k) + K_{\xi_i}\xi_i(k),\ i=1,2,\ldots,N, \label{equ:state_feedback_heter}
\end{align}
where $K_{x_i}\in\mathbb{R}^{m_i\times n_i}$ is feedback gain such that $A_i+B_iK_{x_i}$ is Schur, the existence of which is guaranteed by Assumption \ref{assum3}. Furthermore, feedforward gain $K_{\xi_i}\in\mathbb{R}^{m_i\times q}$ is defined as  $K_{\xi_i}=U_i-K_{x_i}X_i$ with $(X_i,U_i)$ being the solution pair to the regulator equations \eqref{equ:regulation}. 
\begin{theorem} \label{theorem1}
	Let Assumptions \ref{assum1}--\ref{assum4} hold. Consider the closed-loop system consisting of the multi-agent system \eqref{equ:plant}, the control laws \eqref{equ:state_feedback_heter}, and the distributed observers  \eqref{equ:distributed_ob_heter} with the predictors \eqref{equ:predictor_upsilon_heter}, \eqref{equ:predictor_xi_heter} and distributed predictors \eqref{equ:distributed_predictor_upsilon_s_heter}, \eqref{equ:distributed_predictor_xi_s_heter}. Then, there exist  feedback gain $K_{x_i}\in\mathbb{R}^{m_i\times n_i}$ such that $A_i+B_iK_{x_i}$ is Schur, feedforward gain $K_{\xi_i}\in\mathbb{R}^{m_i\times q}$ satisfies $K_{\xi_i}=U_i-K_{x_i}X_i$ with $(X_i,U_i)$ being the solution of \eqref{equ:regulation}, $i=1,2,\ldots,N$, and coupling gain $\beta\in\mathbb{R}$ satisfies the condition \eqref{ineq: beta1}. Furthermore,  the output synchronization problem of heterogeneous multi-agent systems with distinct communication delays is solved.
\end{theorem}
\begin{proof}
	Let regulated output $e_i(k) = C_ix_i(k) + F\upsilon(k)$, regulated state $\tilde{x}_i(k) = x_i(k) - X_i\upsilon(k)$, and regulated control input $\tilde{u}_i(k) = u_i(k) - U_i\upsilon(k)$. By making use of the solution to  the regulator equations \eqref{equ:regulation}, it follows that, for $i=1,2,\ldots,N$
	\begin{align}
		\tilde{x}_i(k + 1)  
		=&\ A_i \tilde{x}_i(k)  + B_i \tilde{u}_i(k) + (A_i X_i  + B_iU_i  - X_iS)\upsilon(k) \notag\\
		=&\ A_i \tilde{x}_i(k)  + B_i \tilde{u}_i(k),  \label{equ:tilde_x} \\
		e_i(k) = &\ C_ix_i(k) + F\upsilon(k) 
		=  C_i\tilde{x}_i(k). \label{equ:e}
	\end{align}
	Besides, based on the definition of $\tilde{u}(k)$, it implies that
	\begin{align}
		\tilde{u}_i(k) 
		=&\ K_{x_i}\tilde{x}_i(k)  + K_{\xi_i}\tilde{\xi}_i(k) - (U_i - K_{x_i}X_i- K_{\xi_i}) \upsilon(k)\notag\\
		=&\ K_{x_i}\tilde{x}_i(k)  + K_{\xi_i}\tilde{\xi}_i(k). \label{equ:tilde_u}
	\end{align}
	Then, substituting \eqref{equ:tilde_u} into \eqref{equ:tilde_x} yields
	$\tilde{x}_i(k + 1)  = (A_i + B_iK_{x_i})\tilde{x}_i(k)  + B_iK_{\xi_i}\tilde{\xi}_i(k)$.
	Furthermore, using Lemmas \ref{lemma2}--\ref{lemma3}, we have $\lim_{k\to\infty}\tilde{x}_i(k)=0$, which follows that $\lim_{k\to\infty}e_i=\lim_{k\to\infty}C_i\tilde{x}_i(k)=0$,  $i=1,2,\ldots,N$. Then, it implies that $\lim_{k\to\infty}y_i(k)= - \lim_{k\to\infty} F\upsilon(k)=\lim_{k\to\infty}y_j(k)$, for any $i,j\in\{1,2,\ldots,N\}$. Therefore, the output synchronization problem is solved, i.e., $\lim_{k\to\infty}|y_i(k)-y_j(k)|=0$, $i,j\in\{1,2,\ldots,N\}$. The proof of Theorem \ref{theorem1} is completed.
\end{proof}
\subsection{Distributed dynamical output feedback}
In the sequel, we further consider the problem where the state of each agent is unmeasurable. Firstly, we present a observer design, and then provide a prediction-based dynamical output feedback design.

To overcome the unmeasurable state problem, inspired by \cite{huang2016cooperative}, for \textbf{Agent} $\boldsymbol{i}$, $\boldsymbol{i}=1,2,\ldots,N$, the following Luenberger observers are introduced:
\begin{align}
	\hat{x}_i(k+1) = A_i\hat{x}_i(k) + B_iu_i(k) + L_i(y_i(k) - C_i\hat{x}_i(k)),  \label{equ:observer}
\end{align}
in which $L_i$ denotes observer gain for \textbf{Agent} $\boldsymbol{i}$, chosen such that $A_i+L_iC_i$ is Schur. The existence of such a matrix is guaranteed by the detectability condition in Assumption \ref{assum3}.
Furthermore, based on \eqref{equ:observer}, the following prediction-based distributed dynamical output feedback  controller is given by
\begin{align}
	u_i(k) = K_{\hat{x}_i} \hat{x}_i(k) + K_{\xi_i}\xi_i(k),\ i=1,2,\ldots,N, \label{equ:output_feedback}
\end{align}
where $K_{\hat{x}_i} \in\mathbb{R}^{m_i\times n_i}$ is feedback gain such that $A_i + B_iK_{\hat{x}_i}$ is Schur, and the feedforward gain $K_{\xi_i}\in\mathbb{R}^{m_i\times q}$ satisfies $K_{\xi_i}=U_i-K_{\hat{x}_i}X_i$ with $(X_i,U_i)$ being the solution of \eqref{equ:regulation}.
\begin{theorem} \label{theorem2}
	Let Assumptions \ref{assum1}--\ref{assum4} hold. Consider the closed-loop system consisting of the multi-agent system \eqref{equ:plant}, the observer \eqref{equ:observer}, the control laws \eqref{equ:output_feedback} and the distributed observers  \eqref{equ:distributed_ob_heter} with the predictors \eqref{equ:predictor_upsilon_heter}, \eqref{equ:predictor_xi_heter} and distributed predictors \eqref{equ:distributed_predictor_upsilon_s_heter}, \eqref{equ:distributed_predictor_xi_s_heter}. Then, there exist  feedback gain $K_{\hat{x}_i}\in\mathbb{R}^{m_i\times n_i}$ such that $A_i+B_iK_{\hat{x}_i}$ is Schur, feedforward gain $K_{\xi_i}\in\mathbb{R}^{m_i\times q}$ satisfies $K_{\xi_i}=U_i-K_{\hat{x}_i}X_i$ with $(X_i,U_i)$ being the solution of \eqref{equ:regulation}, $i=1,2,\ldots,N$, and coupling gain $\beta\in\mathbb{R}$ satisfies the condition \eqref{ineq: beta1}. Furthermore,  the output synchronization problem of heterogeneous multi-agent systems with distinct communication delays is solved.
\end{theorem}
\begin{proof}
	Similar to the proof of Theorem \ref{theorem1}. We define observer error as $\breve{x}_i(k) = x_i(k)-\hat{x}_i(k)$. The definitions of regulated output $e_i(k)$,  regulated state $\tilde{x}_i(k)$, and regulated control input $\tilde{u}_i(k)$, $i=1,2,\ldots,N$ remain the same as those in the proof of Theorem \ref{theorem1}. Then, based on the definition, the observer error satisfies
		$\breve{x}_i(k + 1) = (A_i + L_iC_i) \breve{x}_i(k)$, $i=1,2,\ldots,N$.
	Owing to $(A_i + L_iC_i)$ being Schur, $\lim_{k\to\infty}\breve{x}_i(k) = 0$ exponentially. 
	Besides, according to the definition of $\tilde{u}_i(k)$, we have
	$\tilde{u}_i(k) = K_{\hat{x}_i}\tilde{x}_i(k) + K_{\xi_i}\tilde{\xi}_i(k) - K_{\hat{x}_i}\breve{x}_i(k)$.
	Furthermore, noting the fact that the regulated state  and the regulated output  still satisfy \eqref{equ:tilde_x}--\eqref{equ:e}, respectively, it follows that
	$	\tilde{x}_i(k + 1) = (A_i +  B_iK_{\hat{x}_i})\tilde{x}_i(k) + B_iK_{\xi_i}\tilde{\xi}_i(k) - B_iK_{\hat{x}_i}\breve{x}_i(k)$.
	Then, applying Lemmas \ref{lemma2}--\ref{lemma3}, we acquire $\lim_{k\to\infty}\tilde{x}_i(k)=0$. Moreover,  we have $\lim_{k\to\infty}e_i=\lim_{k\to\infty}C_i\tilde{x}_i(k)=0$,  $i=1,2,\ldots,N$. Then, we obtain $\lim_{k\to\infty}y_i(k)= - \lim_{k\to\infty} F \upsilon(k)$ $ = \lim_{k\to\infty}y_j(k)$, for any $i,j\in\{1,2,\ldots,N\}$. Therefore, the output synchronization problem is solved, i.e., $\lim_{k\to\infty}|y_i(k)-y_j(k)|=0$, $i,j\in\{1,2,\ldots,N\}$. The proof of Theorem \ref{theorem2} is completed.
\end{proof}
\section{Simulation results and koopman-based SIR application}\label{simulation}

\subsection{Numerical simulation}
This section provides a numerical example to evaluate the effectiveness of the proposed schemes. 
A multi-agent system composed of four agents is considered, with an exosystem serving as the leader. Specifically, the system matrices, input and output vectors of agents \eqref{equ:plant} are given as follows:
\begin{gather*}
	A_i = \begin{bNiceMatrix}[small,xdots/shorten=6pt]
		0 & 1 & 0 \\ 0 & 0 & \alpha_{i,1} \\ \alpha_{i,2} &  \alpha_{i,3} & \alpha_{i,4}
	\end{bNiceMatrix}, \  
	B_i = \begin{bNiceMatrix}[small,xdots/shorten=6pt]
		0 \\ 0 \\ 1 
	\end{bNiceMatrix},\ 
	C_i^\mathrm{T} = \begin{bNiceMatrix}[small,xdots/shorten=6pt]
		1 \\ 0 \\ 0
	\end{bNiceMatrix}.
\end{gather*}
The parameters $\{\alpha_{i,1}, \alpha_{i,2}, \alpha_{i,3}, \alpha_{i,4} \}$ are chosen as $\{1,0,1,1\}$, $\{2,0,1,2\}$, $\{1,3,1,2\}$ and $\{2,1,1,1\}$. In addition, the exosystem is given by \eqref{equ:exosystem} with 
\begin{gather*}
	S = \begin{bNiceMatrix}[small,xdots/shorten=6pt]
		\cos(\omega\pi) & \sin(\omega\pi)\\ -\sin(\omega\pi) & \cos(\omega\pi)
	\end{bNiceMatrix},\ \omega = 1.2.
\end{gather*}
The eigenvalues of $S$ are as follows:
\begin{align*}
	\sigma(S) = \{\cos(\omega\pi) + \sin(\omega\pi)\mathrm{i}, \cos(\omega\pi) - \sin(\omega\pi)\mathrm{i}\}.
\end{align*}
 Fig. \ref{fig_1} exhibits the structure of communication topology. Based on this, we provide the weighted adjacency matrix $\mathcal{A}$ and Laplace matrix $\mathcal{L}$ as  follows:
\begin{gather*}
	\mathcal{A} =\begin{bNiceMatrix}[small,xdots/shorten=6pt]
		0 & 0 & 0 & 0 & 0\\
		1 & 0 & 0 & 0 & 0\\
		1 & 1 & 0 & 0 & 0\\
		0 & 1 & 0 & 0 & 0 \\
		0 & 1 & 1 & 1 & 0 \\
	\end{bNiceMatrix},\ 
	\mathcal{L} =\begin{bNiceMatrix}[small,xdots/shorten=6pt]
	0 & 0 & 0 & 0 & 0\\
	-1 & 1 & 0 & 0 & 0\\
	-1 & -1 & 2 & 0 & 0\\
	0 & -1 & 0 & 1 & 0 \\
	0 & -1 & -1 & -1 & 3 \\
	\end{bNiceMatrix}.
\end{gather*}
To guarantee that $\bar{S}$ is Schur, we select $\beta = 0.25$. Additionally, according to \eqref{equ:regulation}, and setting $F = [0, -1]$, the solution to the regulator equations are as follows:
\begin{align*}
	X_i =&\ \begin{bNiceMatrix}[small,xdots/shorten=6pt]
			0 & 1 \\ -\sin(\omega\pi) & \cos(\omega\pi) \\ -\frac{1}{\alpha_{i,1}}\sin(2\omega\pi) & \frac{1}{\alpha_{i,1}}\cos(2\omega\pi)
	\end{bNiceMatrix},\\
	U_i^\mathrm{T} =&\ \begin{bNiceMatrix}[small,xdots/shorten=6pt]
	-\frac{1}{\alpha_{i,1}}\sin(3\omega\pi)+ \alpha_{i,3}\sin(\omega\pi) + \frac{\alpha_{i,4}}{\alpha_{i,1}}\sin(2\omega\pi) \\ 
	\frac{1}{\alpha_{i,1}}\cos(3\omega\pi) - \alpha_{i,2} - \alpha_{i,3}\cos(\omega\pi) - \frac{\alpha_{i,4}}{\alpha_{i,1}}\cos(2\omega\pi)
	\end{bNiceMatrix}.
\end{align*}
In simulation, we consider the distinct communication delays $\tau_{0,1} = 4$, $\tau_{0,2} = 5$, $\tau_{1,2} = 6$, $\tau_{1,3} = 11$, $\tau_{1,4} = 3$, $\tau_{2,4} = 10$ and $\tau_{3,4} = 12$. To compensate for these delays, we employ the predictors \eqref{equ:predictor_upsilon_heter}--\eqref{equ:predictor_xi_heter} and distributed predictors \eqref{equ:distributed_predictor_upsilon_s_heter}--\eqref{equ:distributed_predictor_xi_s_heter} with extended prediction horizons determined according to \eqref{equ:prediction_horizon_heter}. Specifically, for the exosystem with respect to Agent $1$, the extended prediction horizon  corresponds to the modified weight of the path $(1,2,4)$, yielding $\mathbb{P}_{0,1} = \{0,1,\ldots,\tau_{1,2}+\tau_{2,4}-2\}$. Similarly, we can deduce $\mathbb{P}_{0,2} = \{0,1,\ldots,\tau_{2,4}-1\}$, $\mathbb{P}_{1,2} = \{0,1,\ldots,\tau_{2,4}-1\}$ and $\mathbb{P}_{1,3} = \{0,1,\ldots,\tau_{3,4}-1\}$. Furthermore, to achieve the output synchronization of the multi-agent system, the feedback gains are chosen as $K_{x_1} = [-0.0313,	-0.9375,	-0.5000]$, $K_{x_2} = [0.0030,	-1.0550,	-1.4000]$, $K_{x_3} = [-2.9950,	-1.0950, $ $ -1.4500]$ and $K_{x_4} = [-1.0156,	-0.9688,	-0.5000]$. Subsequently, the feedforward gains can be calculated by the condition $K_{\xi_i}=U_i-K_{x_i}X_i$, $i=1,2,3,4$,  where $K_{\xi_1} = [1.3898,	0.2363]$, $K_{\xi_2} = [0.7932,	0.0143]$, $K_{\xi_3} = [1.5300,	0.0572]$ and $K_{\xi_4} = [0.6949, 	0.1182]$. In the subsequent simulations, all controller parameters and gains are kept the same as those used above.
\begin{figure}[!t]
	\centering
	\begin{tikzpicture}[scale=0.5]
		
		

        \node[draw, circle, fill=Maroon!50, very thick, minimum size=0.5cm] (a0) at (0,0) {\scriptsize $0$};
		
		\node[draw, circle, fill=NavyBlue!50, very thick, minimum size=0.5cm] (a1) at (2,0) {\scriptsize $1$};
		\node[draw, circle, fill=NavyBlue!50, very thick, minimum size=0.5cm] (a2) at (4,1.5) {\scriptsize $2$};
		\node[draw, circle, fill=NavyBlue!50, very thick, minimum size=0.5cm] (a4) at (6,0) {\scriptsize $4$};
		\node[draw, circle, fill=NavyBlue!50, very thick, minimum size=0.5cm] (a3) at (4,-1.5) {\scriptsize $3$};

        \draw[->, dashed, Blue, very thick, >=latex]
			(a0.east) to (a1.west);

		\draw[->, dashed, Blue, very thick, >=latex]
			(a0.north east) to (a2.west);

		\draw[->, dashed, Blue, very thick, >=latex]
			(a1.east) to  (a4.west);

		\draw[->, dashed, Blue, very thick, >=latex]
			(a1.north east) to  (a2.south west);

		\draw[->, dashed, Blue, very thick, >=latex]
			(a1.south east) to  (a3.north west);

		\draw[->, dashed, Blue, very thick, >=latex]
			(a2.south east) to  (a4.north west);

		\draw[->, dashed, Blue, very thick, >=latex]
			(a3.north east) to  (a4.south west);
        
        

        
        
		
		
		
	\end{tikzpicture}
	\caption{The communication topology (agent 0 is the exosystem).}
	\label{fig_1}
\end{figure}
\begin{figure}[!t]
	\centering
	\includegraphics[width=3in]{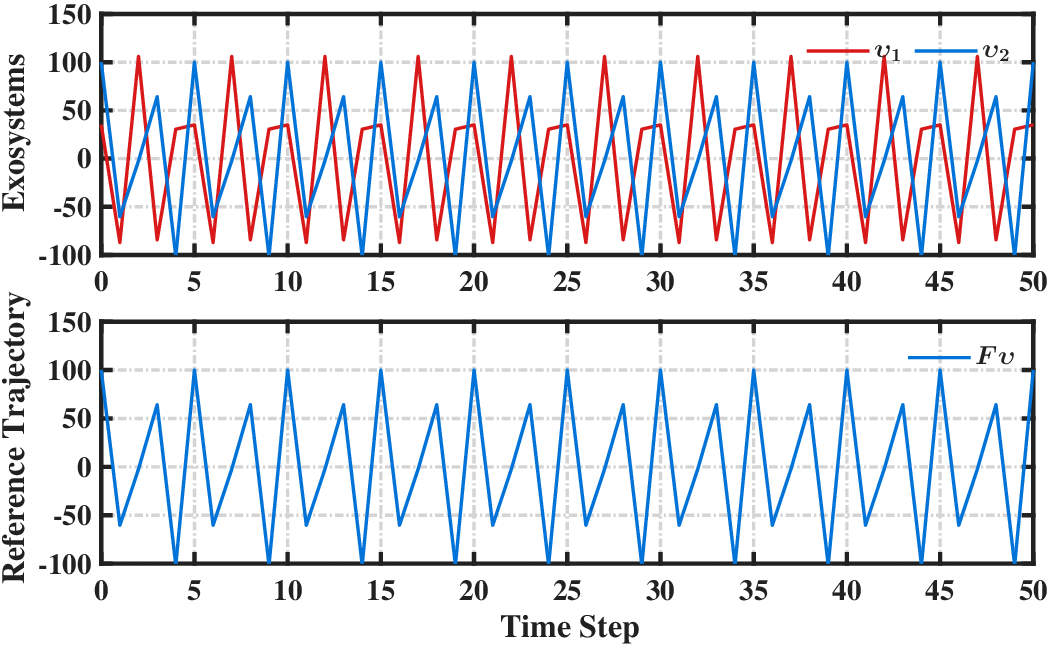}
	\caption{The evolution  of the exosystem.}
	\label{fig_2_50}
\end{figure}
\begin{figure}[!t]
    \centering
	\includegraphics[width=3.3in]{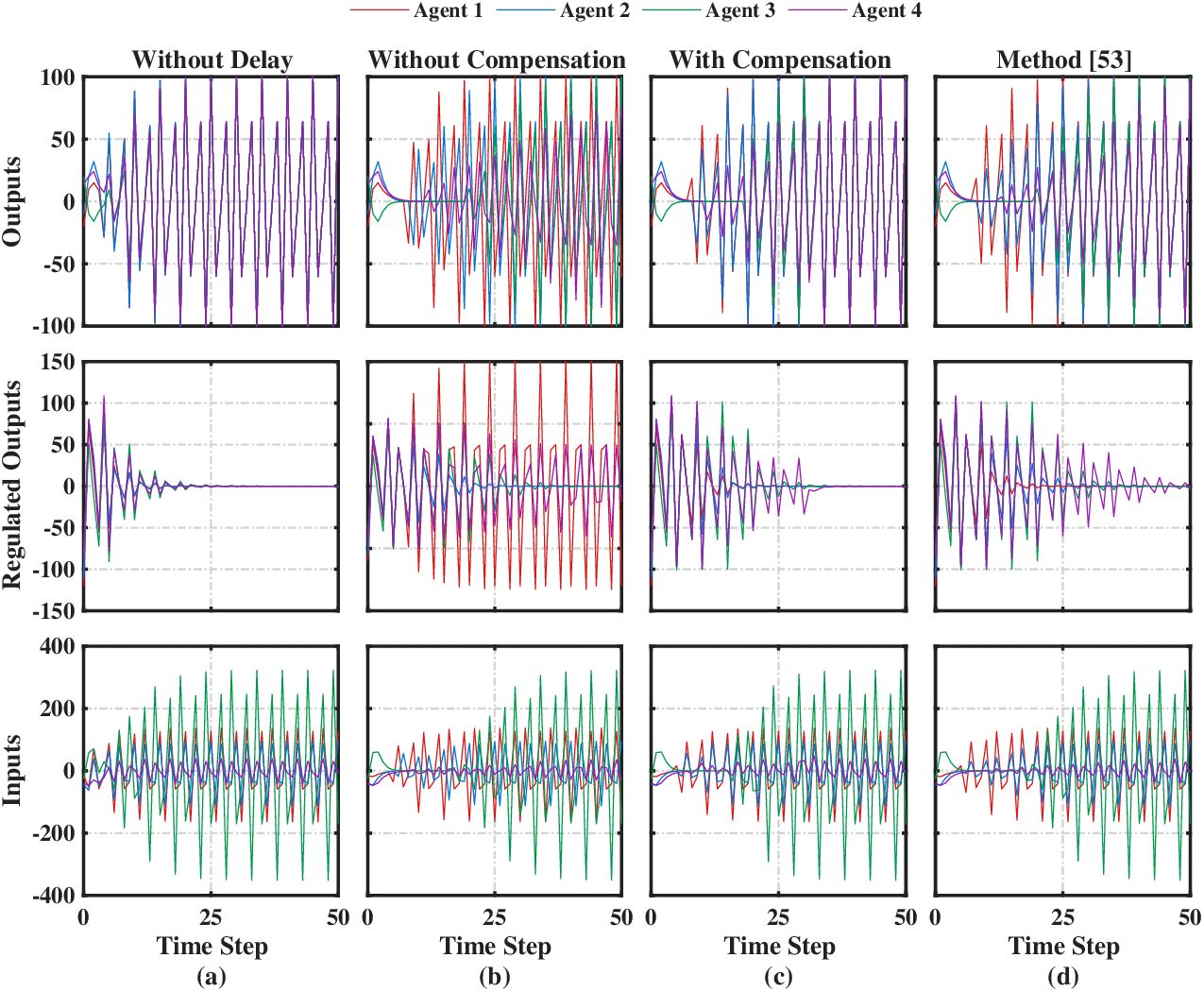} 
	\caption{Simulation results for the MAS \eqref{equ:plant} under the distributed state feedback. Top: Outputs; Middle: Regulated Outputs; Bottom: Inputs. (a)  Without communication delays (b) Without delay compensation (c) With delay compensation (d) Method in \cite{xu2017consensus}}
    \label{fig_3}
\end{figure}

The top subfigure of Fig.~\ref{fig_2_50} shows the evolution of the exosystem, and the control objective is to regulate all agent outputs to track the trajectory generated by the exosystem, as shown in the bottom subfigure of Fig.~\ref{fig_2_50}. For comparison, Fig.~\ref{fig_3}a presents the delay-free case under the distributed state feedback in \cite{huang2016cooperative}, where output synchronization is achieved. In contrast, Fig.~\ref{fig_3}b shows that communication delays without compensation significantly degrade the controller's performance and prevent synchronization. To overcome this, predictors and distributed predictors are introduced to construct the prediction-based distributed state feedback, and the corresponding outputs, regulated outputs, and control inputs are shown in Fig.~\ref{fig_3}c. Simulation results indicate that the proposed method effectively compensates for distinct communication delays, improves the performance of the standard distributed state feedback, and thereby successfully achieves output synchronizations. Fig.~\ref{fig_3}d further presents the results obtained by the method in \cite{xu2017consensus}. Although this method can achieve synchronization under communication delays, it does not fully eliminate their influence. By contrast, the proposed approach employs predictors that exactly offset the effect of delays after a finite number of steps, thereby yielding improved transient performance. It is also worth noting that the control inputs exhibit oscillatory profiles. This is because, by Theorem~\ref{theorem1}, the regulated input satisfies $\lim_{k\to\infty}\tilde u_i(k)=0$. Hence, $u_i(k)$ asymptotically follows $U_i \upsilon(k)$. Since $\upsilon(k)$ is oscillatory in this simulation, the control inputs also exhibit oscillatory behavior.

\subsection{Application to SIR epidemic model}
In this section, we would like to further emphasize the importance of the predictor through the SIR epidemic model. During an epidemic outbreak, reducing the peak of infection is crucial for preventing the exhaustion of limited medical resources (e.g., hospital beds and intensive care units). It ensures that the number of infected individuals remains within the healthcare capacity, thereby avoiding a potential collapse of the healthcare system.

Consider population movement between rural and urban areas in epidemic outbreak regions. Due to the typically superior medical facilities and healthcare policies in urban area, contrasted with the limited medical resources in rural regions, rural residents tend to travel to urban area for medical treatment at a migration rate $m_{r,u}$. However, this process of population mobility may increase contact with susceptible populations of urban area, potentially exacerbating the spread of the epidemic. However, for cost-efficiency considerations, the total count is then relayed to urban governments after a lag, thereby introducing a communication delay $D$ in the information flow. Under this background, according to  \cite{zhang2025analysis},  modified SIR models are constructed as follows:
	\begin{align*}
		\dot{s}_r(t) &=  - s_r(t) \beta_{r}(1-m_{r,u})i_r(t),\\
		\dot{i}_r(t) &= s_r(t) \beta_{r}(1-m_{r,u})i_r(t)  - \gamma_ri_r(t), \\
		\dot{r}_r(t) &= \gamma_ri_r(t),\\
		\dot{s}_u(t) &= - s_u(t) \beta_{u}( i_u(t) + m_{r,u}i_r(t-D)),\\
		\dot{i}_u(t) &= s_u(t)\beta_{u} (i_u(t) + m_{r,u}i_r(t-D)) - \tilde{\gamma}_u(t)i_u(t),\\
		\dot{r}_u(t) &=  \tilde{\gamma}_u(t)i_u(t),
	\end{align*}
	where $s_{\sigma}(t)$, $i_{\sigma}(t)$ and $r_{\sigma}(t)$ denote the fractions of susceptible, infected, and recovered populations in region $\sigma$, respectively, with $\sigma=r$ and $\sigma=u$ representing the rural and urban regions. $\tilde{\gamma}_u(t) = \gamma_{u} + \Gamma(t)$ is  a distributed feedback mitigation strategy that dynamically adjusts the recovery rate. This is implemented by providing effective medication, medical supplies, and healthcare workers to the population, where $\Gamma(t)$ serves as a state feedback controller.  Here, $\beta_{\sigma}$ and $\gamma_{\sigma}$ represent the average number of individuals with whom an infectious individual makes sufficient contact to transmit the infection and the  natural recovery rate for subpopulation $\sigma$ ($\sigma\in\{r,u\}$).  Subsequently, since $s_{\sigma}(t) + i_{\sigma}(t) + r_{\sigma}(t) = 1$, and considering our interest lies mainly in the infection dynamics, we derive the following discrete-time reduced-order model with the Euler method: 
	\begin{align*}
		i_r(k+1) &= i_r(k) + h(1 - i_r(k) - r_r(k)) \beta_{r}(1-m_{r,u})i_r(k) \notag \\
		& \quad  - h\gamma_ri_r(k), \\
		r_r(k+1) &= r_r(k) + h\gamma_ri_r(k),\\
		i_u(k+1) &= i_u(k) + h(1 - i_u(k) - r_u(k))\beta_{u} [i_u(k) \notag \\
		&\quad + m_{r,u}i_r(k-\tau)]  - h\tilde{\gamma}_u(k)i_u(k),\\
		r_u(k+1) &= r_u(k) + h\tilde{\gamma}_u(k)i_u(k),
	\end{align*}
	where $h$ is sampling interval and $\tau = D/h$. In  \cite{zhang2025analysis}, the following  distributed feedback control law was proposed: 
	\begin{equation*}
		\Gamma(k) = (1 - i_u(k) - r_u(k))\beta_u(1+m_{r,u}).
	\end{equation*}
	Substituting this feedback law into the above model yields the following closed-loop dynamics:
		\begin{align}
		i_r(k+1) &= [1- h\gamma_r  + h(1 - i_r(k) - r_r(k)) \beta_{r}(1-m_{r,u})] \notag\\
		&\quad \times i_r(k), \label{equ:SIR_ir} \\
		r_r(k+1) &= r_r(k) + h\gamma_ri_r(k), \label{equ:SIR_rr}\\
		i_u(k+1) &= (1 - h\gamma_u) i_u(k) + h(1 - i_u(k) - r_u(k))\beta_{u} m_{r,u}\notag \\
		&\quad \times  (i_r(k-\tau) - i_u(k) ),  \label{equ:SIR_iu}\\
		r_u(k+1) &= r_u(k) + h(\gamma_u + (1 - i_u(k) - r_u(k))\beta_u(1+m_{r,u})) \notag \\
		&\quad \times i_u(k).  \label{equ:SIR_ru}
	\end{align}
    The above dynamics exhibits a structure similar to that of the distributed observers (19); however, it is nonlinear. We therefore leverage Koopman operator theory to construct a finite-dimensional linear approximation of the nonlinear SIR dynamics. In particular, the rural subsystem is autonomous and is lifted by the standard Koopman operator \cite{koopman1931hamiltonian} . In contrast, the urban subsystem is driven by the delayed state of the rural subsystem, which is treated as an exogenous input. Hence, for the urban subsystem, we adopt a Koopman-with-inputs-and-control (KIC) theory \cite{proctor2018generalizing}.

    Let $\bar{x}_r = [i_r,r_r]^\mathrm{T}$ and $\bar{x}_u =[i_u,r_u]^\mathrm{T}$. Then, the closed-loop dynamics \eqref{equ:SIR_ir}--\eqref{equ:SIR_ru} can be compactly represented as
	\begin{align}
		\bar{x}_r(k+1) &= f_r(\bar{x}_r(k)), \label{equ: compact_r}\\ 
		\bar{x}_u(k+1) &= f_u(\bar{x}_u(k),\bar{u}(k)), \label{equ: compact_u}
	\end{align}
	where $f_r: \mathbb{R}^2 \to \mathbb{R}^2$ and  $f_u: \mathbb{R}^2\times\mathbb{R} \to \mathbb{R}^2$ are nonlinear mappings .The input $\bar{u}(k)$ is generated by the delayed rural subsystem, namely, $\bar{u}(k)=L\bar{x}_r(k-\tau)$ with $L=[1,0]$ for $k\ge\tau$, and $\bar{u}(k)=0$ otherwise.
	
    Let $\mathcal{M}_r$  and $\mathcal{M}_u$  denote  the smooth manifolds on which the rural and urban states evolve, respectively. For the autonomous rural subsystem, let \(\mathcal F_r\) be a vector space of scalar observables $\phi_r:\mathcal M_r\to\mathbb R$. According to \cite{williams2015data, korda2018convergence}, for any observables $\phi_r \in \mathcal{F}_r$, the standard Koopman operator $\mathcal K_r:\mathcal F_r\to\mathcal F_r$ is defined by
    \begin{gather*}
		(\mathcal{K}_{r}\phi_r)(\bar{x}_r) =  \phi_r (f_r(\bar{x}_r)) = (\phi_r \circ f_r)(\bar{x}_r),
	\end{gather*}
    For the input-driven urban subsystem, let \(\mathcal F_{u}\) be a vector space of scalar observables depending on both the urban state and the input, i.e., $\phi_u:\mathcal{M}_u\times\mathcal{U}\to\mathbb{R}$, where $\mathcal{U}$ denotes the input manifold. Since the input \(\bar u(k)\) is supplied by the rural subsystem, we do not seek to identify a map from \(\bar u(k)\) to \(\bar u(k+1)\). Hence, for any observables  $\phi_u \in \mathcal{F}_u$, we define the KIC operator \cite{proctor2018generalizing} by
    \begin{gather*}
		(\mathcal{K}_{u} \phi_u)(\bar{x}_u,\bar{u}) =  \phi_u(F_u(\bar{x}_u,\bar{u})) = (\phi_u\circ F_u)(\bar{x}_u,\bar{u}),
	\end{gather*}
    where the augmented mapping $F_u:\mathcal{M}_u\times \mathcal{U} \to \mathcal{M}_u\times \mathcal{U}$ is given by $F_u(\bar{x}_u,\bar{u}) = (f_u(\bar{x}_u,\bar{u}),\bar{u}_0)$. Here, $\bar{u}_0\in\mathcal{U}$ denotes the zero input. In the
present scalar-input case, $\bar{u}_0=0$.
	
	\textcolor{black}{Although $f_r$ and $f_u$ are nonlinear mappings, the induced operators $\mathcal{K}_r$ and $\mathcal{K}_u$ are linear on the corresponding spaces of observables. Indeed, for any $\phi_{r,1},\phi_{r,2}\in\mathcal{F}_r$ and any $a,b\in\mathbb{R}$, one has
    $\mathcal{K}_r(a\phi_{r,1}+b\phi_{r,2}) = (a\phi_{r,1}+b\phi_{r,2})\circ f_r = a(\phi_{r,1}\circ f_r)+b(\phi_{r,2}\circ f_r) = a\mathcal{K}_r\phi_{r,1} + b\mathcal{K}_r\phi_{r,2}$. Similarly, one has, for any $\phi_{u,1},\phi_{u,2}\in\mathcal{F}_u$ and any $a,b\in\mathbb{R}$, $\mathcal{K}_u(a\phi_{u,1}+b\phi_{u,2}) = (a\phi_{u,1}+b\phi_{u,2})\circ F_u = a(\phi_{u,1}\circ F_u)+b(\phi_{u,2}\circ F_u) = a\mathcal{K}_u\phi_{u,1} +b\mathcal{K}_u\phi_{u,2}$. Therefore, both the autonomous subsystem and the input-driven subsystem can be described as linear evolutions in the spaces of observables.}

	\textcolor{black}{While the Koopman-type operators are linear, they are generally infinite-dimensional, which makes their exact computation intractable for most nonlinear systems. Therefore, finite-dimensional approximations are required for practical implementation. For each \(\sigma\in\{r,u\}\), we approximate $\mathcal{K}_\sigma$ on a finite-dimensional observable subspace $\mathcal{F}_{\sigma}^{N_\sigma} := \mathrm{span}\{\psi_{\sigma,1},\psi_{\sigma,2},\ldots,\psi_{\sigma,N_\sigma}\}\subset \mathcal{F}_\sigma$, where \(\psi_{\sigma,i}\), \(i=1,2,\ldots,N_\sigma\), are selected linearly 
	independent dictionary functions. To unify the notation for the autonomous and input-driven subsystems, define
	\begin{align*}
		\zeta_r \coloneqq \bar{x}_r \in\mathbb{R}^{n_r},\quad  \zeta_u \coloneqq (\bar{x}_u,\bar{u})\in\mathbb{R}^{n_u},
	\end{align*}
	where $n_r = 2$ and $n_u = 3$. Then, for \(\sigma\in\{r,u\}\), the corresponding lifted observable vector is 
	given by
	\begin{align*}
		\psi_\sigma(\zeta_\sigma) =
        \begin{bsmallmatrix}
			\psi_{\sigma,1}(\zeta_\sigma) & \psi_{\sigma,2}(\zeta_\sigma) &	\cdots & \psi_{\sigma,N_\sigma}(\zeta_\sigma)
		\end{bsmallmatrix}^{\mathrm T}.
	\end{align*}
    Let $\Pr_\sigma: \mathcal{F}_\sigma\to\mathcal{F}_\sigma^{N_\sigma}$, $\sigma\in\{r,u\}$, be the projection operator onto $\mathcal{F}_\sigma^{N_\sigma}$, and let $\mathcal{I}_\sigma:\mathcal{F}_\sigma\to\mathcal{F}_\sigma$ denote the identity operator on $\mathcal{F}_\sigma$. For any observable $\bar{\phi}_\sigma\in\mathcal{F}_\sigma^{N_\sigma}$, there exists a coefficient vector $\boldsymbol{a}_\sigma=[a_{\sigma,1},a_{\sigma,2},\ldots,a_{\sigma,N_\sigma}]^\mathrm{T}$ such that
	\begin{align*}
		\bar{\phi}_\sigma(\zeta_\sigma) =	\boldsymbol{a}_\sigma^\mathrm{T}\psi_\sigma(\zeta_\sigma),\  \sigma\in\{r,u\},
	\end{align*}
	Since $\mathcal{K}_\sigma\bar{\phi}_\sigma\in\mathcal{F}_\sigma$, the action of the Koopman-type operator can be decomposed as
	\begin{align*}
    	\textstyle 	\mathcal{K}_\sigma\phi_\sigma =	\Pr_\sigma	(\mathcal{K}_\sigma\bar{\phi}_\sigma)	+ (\mathcal{I}_\sigma-\Pr_\sigma)(\mathcal{K}_\sigma\bar{\phi}_\sigma),
	\end{align*}
	where $\Pr_\sigma(\mathcal{K}_\sigma\bar{\phi}_\sigma)$ is the finite-dimensional projected component, and $ (\mathcal{I}_\sigma-\Pr_\sigma)(\mathcal{K}_\sigma\phi_\sigma)$ is the projection residual. Since $\Pr_\sigma(\mathcal{K}_\sigma\bar{\phi}_\sigma)\in\mathcal{F}_\sigma^{N_\sigma}$, there exists a coefficient vector 
	$\boldsymbol{b}_\sigma=[b_{\sigma,1},b_{\sigma,2},\ldots,b_{\sigma,N_\sigma}]^\mathrm{T}$ such that
	\begin{align*}
		\textstyle \Pr_\sigma\left(\mathcal{K}_\sigma\bar{\phi}_\sigma\right)(\zeta_\sigma)	= \boldsymbol{b}_\sigma^\mathrm{T}	\psi_\sigma(\zeta_\sigma).
	\end{align*}
	Moreover, there exists a matrix $K_\sigma\in\mathbb{R}^{N_\sigma\times N_\sigma}$ such that $\boldsymbol{b}_\sigma = K_\sigma^\mathrm{T}\boldsymbol{a}_\sigma$.	Equivalently, the projected action of the Koopman-type operator can be written as
	\begin{align*}
		\textstyle \Pr_\sigma \left(\mathcal{K}_\sigma\bar{\phi}_\sigma\right)(\zeta_\sigma)
		=	\boldsymbol{a}_\sigma^\mathrm{T}K_\sigma\psi_\sigma(\zeta_\sigma).
	\end{align*}
	In the following, we specify the dictionary functions and identify the finite-dimensional matrices $K_\sigma$, $\sigma\in\{r,u\}$, using the Extended Dynamic Mode Decomposition (EDMD) algorithm \cite{williams2015data,korda2018convergence}. For each subsystem, the dictionary set $\mathcal{D}_\sigma$ is chosen as
	\begin{align*}
		\mathcal{D}_\sigma	= \{i_\sigma,r_\sigma\}\cup \{s_\sigma, s_\sigma i_\sigma, s_\sigma r_\sigma, i_\sigma r_\sigma, s_\sigma^2,i_\sigma^2,r_\sigma^2\} \cup \mathcal{O}_\sigma,
	\end{align*}
	where $s_\sigma=1-i_\sigma-r_\sigma$, $\mathcal{O}_r=\emptyset$, and
	$\mathcal{O}_u=\{s_u\bar{u},\bar{u}\}$. By stacking the elements of $\mathcal{D}_\sigma$ in the order given above, we obtain the lifted observable vector $\psi_\sigma$, where $N_r=9$ and $N_u=11$.}

	\textcolor{black}{Based on simulations of the dynamics \eqref{equ: compact_r}--\eqref{equ: compact_u}, we collect the snapshot sequences of the state and exogenous input
	\begin{align*}
		\bar{X}_\sigma = &\
        \begin{bsmallmatrix}
            \zeta_\sigma(1) & \zeta_\sigma(2) & \cdots & \zeta_\sigma(m) & 
	    \end{bsmallmatrix}\in \mathbb{R}^{n_\sigma\times m}, \  \sigma\in\{r,u\},\\
        \bar{U} = &\
        \begin{bsmallmatrix}
            \bar{u}(1) & \bar{u}(2) & \cdots & \bar{u}(m)
	    \end{bsmallmatrix}\in \mathbb{R}^{n_\sigma\times m}.
	\end{align*}
	Then, we construct the following lifted data matrices
	\begin{align*}
		\bar{Y}_{\sigma} &= \begin{bsmallmatrix}
			\psi_\sigma(\zeta_{\sigma }(1)) & \psi_\sigma(\zeta_{\sigma}(2)) &\cdots & \psi_\sigma(\zeta_{\sigma}(m-1))
		\end{bsmallmatrix}\in \mathbb{R}^{N_\sigma\times (m-1)},\\
		\bar{Y}_{\sigma}^+ &= \begin{bsmallmatrix}
			\psi_\sigma(\zeta_{\sigma }^+(1)) & \psi_\sigma(\zeta_{\sigma}^+(2)) &\cdots & \psi_\sigma(\zeta_{\sigma}^+(m-1))
		\end{bsmallmatrix}\in \mathbb{R}^{N_\sigma\times (m-1)},
	\end{align*}
    where $\zeta_r^+(k) \coloneqq f_r(\bar{x}_r(k)) = \bar{x}_r(k+1)$ and $\zeta_u^+(k) \coloneqq F_u(\bar{x}_u(k),\bar{u}(k)) = (\bar{x}_u(k+1),0)$.
	According to the following projected Koopman relation
	\begin{align*}
		\mathcal{K}_\sigma \bar{\phi}_\sigma(\bar{X}_\sigma) =&\  \boldsymbol{a}_\sigma^\mathrm{T}K_\sigma\bar{Y}_{\sigma} + \mathcal{R}_\sigma = 	\boldsymbol{a}_\sigma^\mathrm{T}\bar{Y}_{\sigma}^+,\ \sigma\in\{r,u\},
	\end{align*}
	where $\mathcal{R}_\sigma$ denotes the residual induced by the projection error and data 
	approximation. If the residual vanishes on the sampled data, then
		$K_\sigma\bar{Y}_{\sigma}	= \bar{Y}_{\sigma}^+$.
	In general, $\mathcal{R}_\sigma$ is nonzero, and $K_\sigma$ is obtained by minimizing the 
	residual in the least-squares sense:
	\begin{align*}
		\textstyle K_\sigma = \arg\min_{\widetilde{K}_\sigma}\|	\bar{Y}_{\sigma}^+	- \widetilde{K}_\sigma\bar{Y}_{\sigma}\|_F^2, \ \sigma\in\{r,u\}.
	\end{align*}
	Here,  $\|\cdot\|_F$ is the Frobenius norm. The corresponding analytical solution is $K_\sigma = \bar{Y}_{\sigma}^+\bar{Y}_{\sigma}^{\dagger}$, $\sigma\in\{r,u\}$,
	where $\dagger$ denotes the Moore--Penrose pseudoinverse \cite{jonathan2014on}. Consequently, the 
	finite-dimensional Koopman-type approximation is given by
	\begin{align*}
		\psi_\sigma(\zeta_\sigma^+(k))\approx	K_\sigma\psi_\sigma(\zeta_\sigma(k)),\quad \sigma\in\{r,u\}.
	\end{align*}}
	 Next, we adopt the Singular Value Decomposition (SVD) to compute the pseudoinverse of $\bar{Y}_{\sigma}$ ($\sigma\in\{r,u\}$), which follows that 
		$\bar{Y}_{\sigma} =\textstyle \textrm{U}_{\sigma} \Sigma_{\sigma} \textrm{V}_{\sigma}^\mathrm{H}$.
	 In particular, $A^\mathrm{H}$ represents the complex conjugate transpose of matrix $A$. Consequently, we obtain 
	 \begin{align*}
	 	K_\sigma = \textstyle \bar{Y}_{\sigma}^+\textrm{V}_{\sigma}\Sigma_{\sigma}^{-1}\textrm{U}_{\sigma}^\mathrm{H}, \quad \sigma\in\{r,u\}.
	 \end{align*}
	 Letting $A_r = K_r\in\mathbb{R}^{9\times 9}$, $C_r = [\mathbf{I}_2 | \mathbf{0}_{2\times 7} ]\in\mathbb{R}^{2\times 9}$ and $\bar{X}_r(k) = \psi_r(\bar{x}_r(k))$, we first derive the following Koopman operator-based linear SIR model for the rural area:
	 \begin{align}
	 	\bar{X}_r(k+1) =&\ A_r \bar{X}_r(k), \label{equ:Koopman_r}\\
	 	\bar{x}_r(k) =&\ C_r \bar{X}_r(k). \label{equ:Koopman_r_o}
	 \end{align}
	 Furthermore, the vector of observables $\psi_u$ is partitioned into a state component $\bar{X}_u$ and an input component $\bar{u}$, such that $\psi_u = [\bar{X}_u^\mathrm{T}, \bar{u}]^\mathrm{T}$. This leads to the following partitioned structure for the approximated Koopman operator:
	 \begin{align*}
	 	\begin{bsmallmatrix}
	 		\bar{X}_u(k+1)\\
	 		\bar{u}(k+1)
	 	\end{bsmallmatrix} = K_u\begin{bsmallmatrix}
	 	\bar{X}_u(k)\\
	 	\bar{u}(k)
	 	\end{bsmallmatrix}  = \begin{bsmallmatrix}
	 	A_u & B_u\\
	 	0 & 0
	 	\end{bsmallmatrix}\begin{bsmallmatrix}
	 	\bar{X}_u(k)\\
	 	\bar{u}(k)
	 	\end{bsmallmatrix},
	 \end{align*}
	 Moreover, we extract the dynamic part of the system
	 \begin{align*}
	 	\bar{X}_u(k+1) =&\ A_u \bar{X}_u(k) + B_u \bar{u}(k), \\
	 	\bar{x}_u(k) =&\ C_u \bar{X}_u(k), 
	 \end{align*}
	where $C_u = [\mathbf{I}_2 | \mathbf{0}_{2\times 8} ]\in\mathbb{R}^{2\times 10}$. Then, from the definition of $\bar{u}$, we have 
	 \begin{align}
	 	\bar{X}_u(k+1) =&\ A_u \bar{X}_u(k) + B_uLC_r \bar{X}_r(k-\tau), \label{equ:Koopman_u}\\
	 	\bar{x}_u(k) =&\ C_u \bar{X}_u(k). \label{equ:Koopman_u_o} 
	 \end{align}
 	Now, we have constructed the Koopman operator-based linear model \eqref{equ:Koopman_r}--\eqref{equ:Koopman_u_o}, which presents a structure similar to that of the distributed observers \eqref{equ:distributed_ob_heter}.
 	
 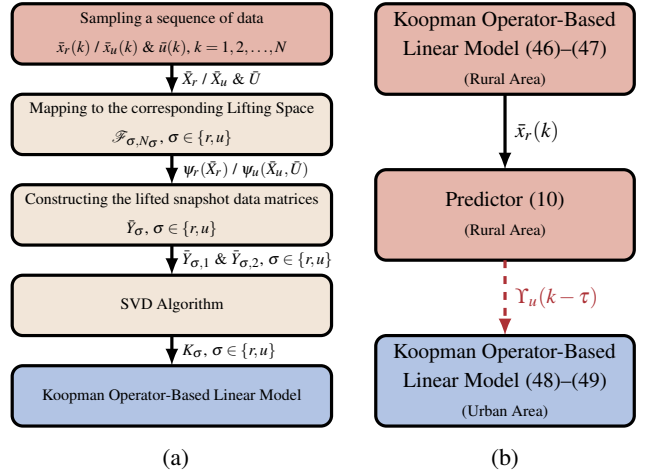
\begin{figure}
 		\centering
 		\subfloat[]{
 		\begin{tikzpicture}[scale=0.4]
 			
 			\draw[rounded corners, very thick, fill=Maroon!30] (-4.25,3) rectangle (6.25,1) node[pos=.5,align=center] {\tiny Sampling a sequence of data \\ \tiny $\bar{x}_r(k)$ / $\bar{x}_u(k)$ \& $\bar{u}(k)$, $k=1,2,\ldots,N$};
 			
 			\draw[->, very thick, >=latex]  (1, 1)  ->  (1,0) node[pos=.5,right]{\tiny $\bar{X}_{r}$ / $\bar{X}_{u}$ \& $\bar{U}$};
 			\draw[rounded corners, very thick,  fill=brown!20] (-4.25,0) rectangle (6.25,-2) node[pos=.5,align=center] {\tiny Mapping to the corresponding  lifting space \\ \tiny $\mathcal{F}_{\sigma,N_\sigma}$, $\sigma\in\{r,u\}$};
 			
 			\draw[->, very thick, >=latex]  (1, -2)  ->  (1,-3) node[pos=.5,right]{\tiny $\psi_r(\bar{X}_{r})$ / $\psi_u(\bar{X}_{u},\bar{U})$};
 			
 			\draw[rounded corners, very thick, fill=brown!20] (-4.25,-3) rectangle (6.25,-5) node[pos=.5,align=center] {\tiny Constructing the lifted snapshot data matrices \\ \tiny $\bar{Y}_{\sigma}$ \& $\bar{Y}_{\sigma}^+$, $\sigma\in\{r,u\}$}; 			
 			
 			\draw[->, very thick, >=latex]  (1, -5)  ->  (1,-6);  
 			
 			\draw[rounded corners, very thick, fill=brown!20] (-4.25,-6) rectangle (6.25,-8) node[pos=.5] {\tiny SVD algorithm}; 			
 			
 			\draw[->, very thick, >=latex]  (1, -8)  ->  (1,-9) node[pos=.5,right]{\tiny $K_{\sigma}$, $\sigma\in\{r,u\}$};
 			
 			\draw[rounded corners, very thick, fill=NavyBlue!30] (-4.25,-9) rectangle (6.25,-11) node[pos=.5] {\tiny Koopman operator-based linear model }; 
 		\end{tikzpicture} 	}
 	 \hfill
 	\subfloat[]{
 	\begin{tikzpicture}[scale=0.4]
 		
 		
 		\draw[rounded corners, very thick, fill=Maroon!30] (-3.25,9) rectangle (5.25,6) node[pos=.5, align = center] {\scriptsize Koopman Operator-Based\\ \scriptsize Linear Model \eqref{equ:Koopman_r}--\eqref{equ:Koopman_r_o} \\ \tiny (Rural Area)};
 		
 		\draw[->,very thick, >=latex]  (1, 6) ->  (1,3.5) node[pos=.5,right]{\scriptsize $\bar{x}_{r}(k)$};
 		
 		\draw[rounded corners, very thick, fill=Maroon!30] (-3.25,3.5) rectangle (5.25,0.5) node[pos=.5, align = center] {\scriptsize Predictor \eqref{equ:distributed_predictor_upsilon_s_heter} \\ \tiny (Rural Area)};

		\draw[->,dashed, Maroon, very thick, >=latex]  (1, 0.5) ->  (1, -2) node[pos=.5,right]{\scriptsize $\Upsilon_{u}(k-\tau)$}; 
		
		
 		\draw[rounded corners, very thick, fill=NavyBlue!30] (-3.25,-2) rectangle (5.25,-5) node[pos=.5, align = center] {\scriptsize Koopman Operator-Based\\ \scriptsize Linear Model \eqref{equ:Koopman_u}--\eqref{equ:Koopman_u_o} \\
 			\tiny (Urban Area)};
 	\end{tikzpicture}
 	}
 		\caption{Schematic diagram of the Koopman operator-based modeling and prediction framework. (a) Identification process of the linear model via EDMD algorithm; (b) Interconnected prediction structure between rural and urban areas with time delay (same-colored blocks indicate the same area).}
 		\label{fig_SIR}

 \end{figure}
 \begin{figure}[!t]
 	\centering
 	\includegraphics[width=3in,keepaspectratio]{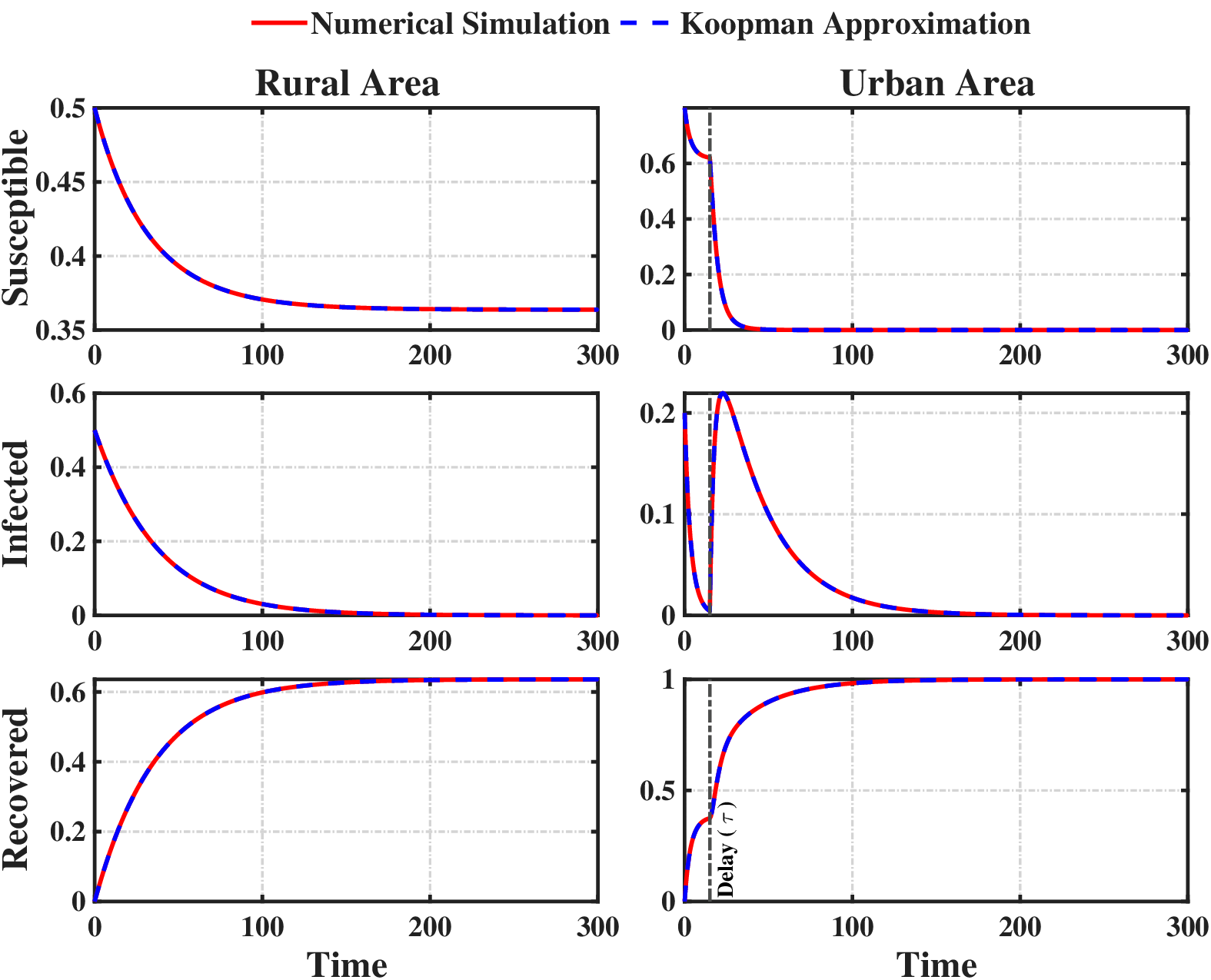}
 	\caption{Comparison of SIR model trajectories: Numerical simulation v.s. Koopman approximation.}
 	\label{fig_7}
 \end{figure}
 \begin{figure}[!t]
 	\centering
 	\includegraphics[width=3in]{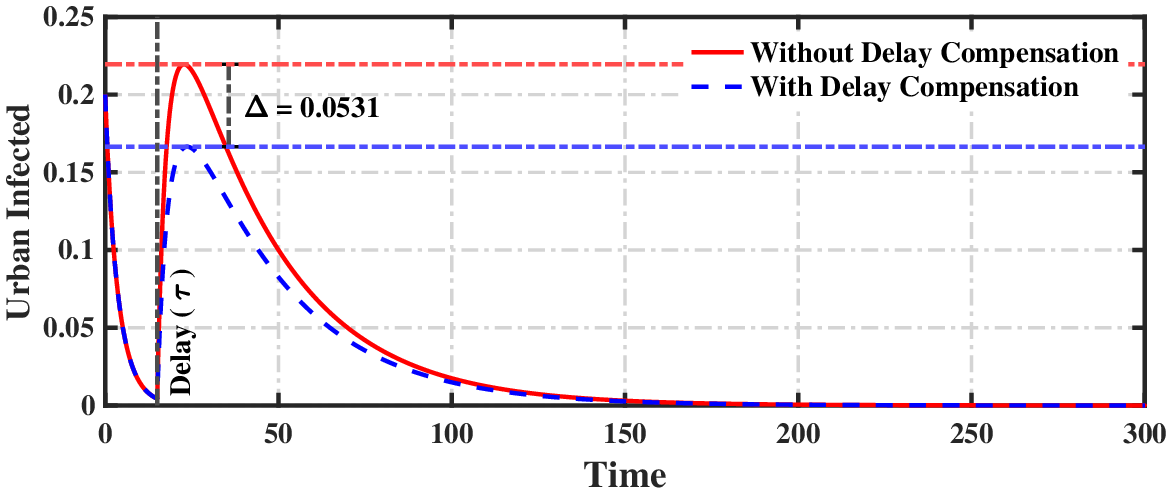}
 	\caption{Comparison of SIR model trajectories: Without delay compensation v.s. With delay compensation.}
 	\label{fig_8}
 \end{figure}
 
 	In the sequel, we present results from both numerical simulations and the Koopman operator-based linear approximation to verify the effectiveness of the constructed model. In simulation, we set the sampling interval $h = 0.01$. A relatively large time delay $D = 10$, coupled with a high mobility rate $m_{r,u} = 0.95$, is selected to accentuate the influence of the delay on the epidemic propagation. The system parameters are chosen as $\beta_r = \beta_u = 0.35$ and $\gamma_r = \gamma_u = 0.35$, which are adopted from \cite{zhong2021country, cooper2020sir} based on early COVID-19 data in China. The initial conditions are defined as $i_r(0) = 0.5$, $i_u(0) = 0.2$, $r_r(0) = r_u(0) = 0$. Fig.\ref{fig_7} presents the comparison between the numerical simulation results of the original nonlinear dynamics \eqref{equ:SIR_ir}--\eqref{equ:SIR_ru} and the Koopman operator-based linear approximation \eqref{equ:Koopman_r}--\eqref{equ:Koopman_u_o}. The trajectories of Koopman operator-based linear approximation (dashed blue lines) closely align with the ground truth of the original nonlinear system (solid red lines). This high fidelity confirms that the constructed finite-dimensional Koopman model successfully captures the dominant global dynamics. Subsequently, we apply the proposed delay compensation strategy to mitigate the communication delay, leveraging the Koopman operator-based linear model \eqref{equ:Koopman_r}--\eqref{equ:Koopman_u_o}. Since our primary focus is on the mitigation of the infected population, we solely present the trajectories of the infection. The comparison of simulation results is illustrated in Fig. \ref{fig_8}, where the solid red line represents the system without delay compensation, and the dashed blue line corresponds to the system with delay compensation. The difference in the peak values between the two scenarios is $\Delta = 0.0531$. While this margin may appear marginal due to population normalization, it translates to a substantial impact in real-world scenarios. For instance, in a city with a population exceeding $4$ million, the proposed delay compensation strategy ensures a reduction of over 200,000 infected individuals at the peak. Consequently, this reduction in infection numbers can, to a certain extent, decrease disease-induced mortality. This holds significant importance for effective epidemic prevention and control.
	 \begin{remark}
	 For the sake of computational simplicity, we use data from a single trajectory in this work. It is worth noting that the dataset can be constructed by concatenating multiple trajectories sampled from the manifold (see \cite{williams2015data, korda2018convergence}). This approach enriches the diversity of the data, thereby enhancing the generalization capability of the Koopman approximation model. Besides, since the Koopman operator provides a linear approximation of the underlying nonlinear dynamics, iterative application of the matrix $K_\sigma$ can lead to a gradual drift from the invariant manifold defined by the observables. To mitigate this error accumulation, we employ a re-lifting strategy. At each time step, instead of propagating the high-dimensional basis function vector $\psi_\sigma$ directly, we extract the physical state variables (e.g., $\bar{x}_\sigma$) from the linear prediction and re-evaluate the observable functions. This projection step restricts the trajectory to the valid state manifold and improves long-term prediction stability. Beyond standard EDMD, the Deep Koopman algorithm \cite{lusch2018deep, otto2019linearly} leverages deep learning to handle general nonlinear dynamics, particularly when suitable basis functions are unavailable or the underlying model is unknown.
	 \end{remark}
\section{Concluding remarks}\label{conclusion}
In this paper, we have investigated the output synchronization problem of discrete-time heterogeneous MASs subject to communication delays. To counteract the adverse effects of these delays, prediction-based distributed control strategies have been developed. Specifically, we integrated a predictor mechanism into the standard distributed observer design to handle communication delays. Building upon this modified distributed observer, we proposed prediction-based distributed state-feedback and dynamic output-feedback controllers to achieve output synchronization. Consequently, the outputs of all agents are regulated to track a common trajectory generated by an exosystem. Finally, the effectiveness of the proposed approaches has been verified through a numerical example and a Koopman operator-based linear SIR epidemic model. Numerical simulation results demonstrate that the proposed methods improve upon the standard distributed control scheme. In the context of the SIR model, our design effectively attenuates the infection peak.

%
%
\bibliographystyle{plain}        
\bibliography{autosam}           
\end{document}